\documentclass[11pt,a4paper]{article}
\usepackage{jheppub} 
\usepackage[T1]{fontenc}
\usepackage{amsmath, amsthm, amssymb, amsfonts}
\usepackage{slashed}
\usepackage{mathtools}
\usepackage{mathrsfs}
\usepackage[dvipsnames,svgnames,x11names]{xcolor}
 
\definecolor{jhepblue}{RGB}{0,80,158}
\definecolor{jhepteal}{RGB}{0,128,128}
\definecolor{jhepcite}{RGB}{150,30,0}
\definecolor{jheplink}{RGB}{0,90,160}
\definecolor{jhepeqn}{RGB}{140,30,80}
\definecolor{jhepsec}{RGB}{0,70,130}
\definecolor{jhepthm}{RGB}{20,100,60}
\definecolor{jhepdef}{RGB}{130,60,0}
 
\hypersetup{
  colorlinks=true,
  linkcolor=jheplink,
  citecolor=jhepcite,
  urlcolor=jhepteal,
  filecolor=jhepblue,
  linktocpage=true,
}
 
\makeatletter
\renewcommand{\theequation}{\textcolor{jhepeqn}{\arabic{section}.\arabic{equation}}}
\@addtoreset{equation}{section}
\makeatother
 
\usepackage{float}
\usepackage{booktabs}
\usepackage{array}
\usepackage{multirow}
\usepackage{caption}
\usepackage{subcaption}
\usepackage{tikz}
\usetikzlibrary{arrows.meta,decorations.markings,calc,patterns,positioning}
 
\usepackage{enumitem}
\usepackage{microtype}
\usepackage{cleveref}
 
\crefformat{equation}{\textcolor{jhepeqn}{(#2#1#3)}}
\crefformat{figure}{\textcolor{jheplink}{figure~#2#1#3}}
\crefformat{table}{\textcolor{jheplink}{table~#2#1#3}}
\crefformat{section}{\textcolor{jheplink}{section~#2#1#3}}
\crefformat{appendix}{\textcolor{jheplink}{appendix~#2#1#3}}
\Crefformat{equation}{\textcolor{jhepeqn}{Eq.~(#2#1#3)}}
\Crefformat{figure}{\textcolor{jheplink}{Figure~#2#1#3}}
\Crefformat{table}{\textcolor{jheplink}{Table~#2#1#3}}
\Crefformat{section}{\textcolor{jheplink}{Section~#2#1#3}}
 
\theoremstyle{plain}
\newtheorem{theorem}{Theorem}[section]
\newtheorem{proposition}[theorem]{Proposition}

\newtheorem{corollary}[theorem]{Corollary}
\newtheorem{conjecture}[theorem]{Conjecture}
 
\theoremstyle{definition}
\newtheorem{definition}[theorem]{Definition}
\newtheorem{assumption}[theorem]{Assumption}

\newcommand{\Hil}{\mathcal{H}}
\newcommand{\Hcode}{\mathcal{H}_{\mathrm{code}}}
\newcommand{\Ncw}{\mathcal{N}_{\mathrm{cw}}}
\newcommand{\BH}{\mathcal{B}(\mathcal{H})}
 
\newcommand{\Ens}{\mathcal{E}}

\newcommand{\Vhat}{\widehat{V}}
\newcommand{\GN}{G_{\!N}}
\newcommand{\SFF}{\mathrm{SFF}}
\newcommand{\Wg}{\mathrm{Wg}}
 
\newcommand{\poly}{\mathrm{poly}}
\newcommand{\negl}{\mathrm{negl}}
\newcommand{\tr}{\mathrm{tr}}
\newcommand{\id}{\mathbf{1}}
 
\newcommand{\Fk}{\mathcal{F}^{(k)}}
\newcommand{\Fone}{\mathcal{F}^{(1)}}

\newcommand{\HSYK}{H_{\mathrm{SYK}}}
\newcommand{\HSchw}{H_{\mathrm{Schw}}}
\newcommand{\Tscr}{T_{\mathrm{scr}}}
\newcommand{\Tdip}{T_{\mathrm{dip}}}
\newcommand{\THeisen}{T_{\!H}}
\newcommand{\Sdim}{S_0}
 
\newcommand{\avg}{\mathrm{avg}}

\title{\boldmath Pseudorandom Dynamics in the SYK Model \\[4pt] and Cryptographic Censorship in JT Gravity}
\author[\dagger]{Pouya Golmohammadi}
\affiliation[\dagger]{Faculty of Physics, Ludwig Maximilian University of Munich, Germany}
\emailAdd{gpouya19@gmail.com}

\abstract{
We argue that the SYK model provides a conditional realization of Cryptographic Censorship in JT gravity. By using the Weingarten calculus and random matrix universality, we prove that the SYK disorder ensemble is an approximate unitary $k$-design for all $k=\poly(N)$, with deviation controlled by the spectral form factor. We then formulate the planted-SYK hardness conjecture and provide evidence from spectral universality and the low-degree polynomial framework. Under this conjecture, the approximate design becomes a gravitationally pseudorandom unitary. Together with the efficient causal wedge reconstruction in JT gravity, this leads to the conclusion that typical states in the SYK microcanonical window must have event horizons in their bulk duals, with the horizonless fraction doubly exponentially small. We further identify the regularized geodesic length of the maximal interior slice as the explicit distinguishing operator. Its prediction gap grows linearly with time due to the stretching of the black hole interior, linking Cryptographic Censorship to the complexity equals volume conjecture.
}

\begin{document}
\maketitle
\raggedbottom

\section{Introduction}
\label{sec:introduction}

What dynamical properties should a boundary CFT possess to guarantee a horizon formation in a dual gravitational system in holographic settings? The Cryptographic Censorship theorem of Engelhardt, Folkestad, Levine, Verheijden, and Yang~\cite{Engelhardt:2024hpe} addresses this question. It states that when the time evolution operator of a holographic CFT \cite{Maldacena_1999,Almheiri:2014lwa} is approximately pseudorandom on a suitable code subspace, the corresponding bulk spacetime must contain an event horizon. The theorem hinges on an essential assumption: that the fundamental time evolution can be accurately represented by a gravitationally pseudorandom unitary (GPRU) ensemble. There are strong indications that chaotic holographic systems exhibit this behavior, as noted in previous studies~\cite{Hayden:2007cs,Sekino:2008he,Shenker:2013pqa,Maldacena:2015waa,Cotler:2016fpe,Brown:2017jil,Bouland:2019pvu}. However, no explicit example of a holographic system producing a GPRU has been demonstrated to date. This paper argues for a positive answer, contingent on a well-supported complexity-theoretic conjecture, and attempts to offer the conditional realization of this framework within a controlled holographic duality.

\noindent\textbf{Summary of results.} 
We initiate our study by examining the statistical randomness of the SYK disorder ensemble $\Ens_T=\{e^{-i\HSYK(\mathbf{J})T}\}_{\mathbf{J}}$, where $\HSYK(\mathbf{J})$ is the $q$-body SYK Hamiltonian with $N$ Majorana fermions and random couplings~$\mathbf{J}$. By decomposing the $k$-th frame potential $\Fk(T)$ through the Weingarten calculus applied to the relative eigenbasis rotation between independent disorder realizations, we derive the formula
\begin{equation}\label{eq:F1_intro}
  \Fone(T) \;=\; \frac{\bigl(\SFF(T)-1\bigr)^2}{L^2-1} + 1\,,
\end{equation}
relating the first frame potential to the spectral form factor $\SFF(T)=\langle|\tr(e^{-i\HSYK T})|^2\rangle_{\mathbf{J}}$, where $L=2^{N/2}$ is the Hilbert space dimension. Our formula, verified numerically for $N=8,10,12$ (\cref{fig:frame_potential}), makes the well-known ramp--plateau structure of the SFF directly responsible for the approach to the Haar value $\Fone=1$. We generalize this to the $k$-th frame potential by classifying the permutations in $S_{2k}$ into paired (contributing the Haar value $k!$, which is exact by unitarity) and unpaired (contributing corrections controlled by the spectral ramp). The result is the bound
\begin{equation}\label{eq:kdesign_intro}
  \frac{|\Fk(T) - k!|}{k!} \;\leq\; C\cdot k\cdot\tau^2 + O\!\bigl(\poly(k)\cdot e^{-cN}\bigr)\,,
\end{equation}
valid for all $k\leq N^a$ (any fixed $a>0$) and evolution times $\Tdip\lesssim T\leq\THeisen$, where $\tau=T/\THeisen$, $\Tdip=\sqrt{N\ln 2}/\sigma$ is the dip time, and $\THeisen=2\pi L/(4\sigma)$ is the Heisenberg time. The first term is the spectral ramp contribution, derived from random matrix universality (Assumption~A1) and the Weingarten calculus; the second is the eigenvector correction, controlled by the 2nd and 4th moments of the relative eigenbasis rotation matching Haar (Assumption~\ref{asm:A2}). For $k=\poly(N)$ and any $T=\poly(N)$, the right-hand side is $\negl(N)$, establishing that the SYK disorder ensemble is an $\varepsilon$-approximate unitary $k$-design for all $k=\poly(N)$ with $\varepsilon=\negl(N)$.

We next argue that the $k$-design property can be promoted to full pseudorandom unitary status \cite{Low:2009} through a complexity-theoretic conjecture. We formulate the \emph{planted-SYK conjecture} (Conjecture~\ref{conj:planted_syk}), which asserts that no polynomial-time quantum algorithm with oracle access to $U=e^{-i\HSYK(\mathbf{J})T}$ can distinguish the SYK disorder ensemble from the Haar measure, for evolution times in the oracle-access window $\Tscr\leq T\leq\exp(N^\gamma)$ with $0<\gamma<1$. Evidence for the conjecture comes from the observed agreement between SYK spectral statistics and random matrix theory~\cite{Cotler:2016fpe,You:2016ldz}, as well as from the low-degree polynomial framework~\cite{Hopkins:2017cle,Kunisky:2019bwz}. In particular, degree-$D$ classical polynomial tests on the spectral data have signal-to-noise ratio $\exp(-\Omega(D\,N))$ for $D=O(\log N)$. We also compare the problem with established planted problems, such as \emph{planted clique} and \emph{spiked random matrix (sparse PCA)} problems. Under this conjecture, we prove that for any polynomial evolution time $T=\poly(N)$ with $T\geq\Tscr$, the SYK ensemble $\Ens_T$ is a GPRU with security parameter $\kappa=N/2=\poly(\GN^{-1})$ (Theorem~\ref{thm:gpru}). Furthermore, in \ref{sec:fid_defi}, we demonstrate that the $k$-design result from Theorem~\ref{thm:kdesign} implies a near-maximal fidelity deficit for non-adaptive adversaries, without relying on Conjecture~\ref{conj:planted_syk}.

We then apply Cryptographic Censorship to JT gravity. We will verify that the JT/SYK holographic dictionary satisfies the assumptions of the Cryptographic Censorship theorem~\cite{Engelhardt:2024hpe}. These conditions include the asymptotically isometric code structure~\cite{Faulkner:2022ada}, the equivalence of bulk and boundary time evolution via the Schwarzian~\cite{Maldacena:2016upp,Jensen:2016pah}, and efficient causal wedge reconstruction through the HKLL protocol~\cite{Hamilton:2005ju,Hamilton:2006fh}, in combination with the shadow tomography algorithm of~\cite{Huang:2022cmu}. Combining these ingredients with the GPRU result, we conclude (Theorem~\ref{thm:main}) that for any polynomial time $T=\poly(N)$ with $T\geq\Tscr$, typical states in the SYK microcanonical window must have event horizons in their JT gravity duals, with the fraction of horizonless states bounded by $\exp(-c\cdot 2^{N/2}/\poly(N))$. Moreover, we identify the regularized geodesic length~$\ell$ of the maximal interior slice~\cite{Iliesiu:2021ari}---the Einstein--Rosen bridge length in the two-sided description---as the explicit distinguishing operator. To be more precise, the causal wedge algorithm, which learns only few-body boundary correlators from polynomially many samples, predicts $\langle\ell\rangle_{\mathrm{predicted}}=\ell_0+O(1)$ (the equilibrium value), while the true expectation value $\langle\ell\rangle_{\mathrm{true}}=\ell_0+(2\pi/\beta)T$ grows linearly due to the stretching of the black hole interior. The resulting gap $\Delta\ell=(2\pi/\beta)T-O(1)\geq O(1)$ exceeds the threshold $\sqrt{\alpha}$ from the learning no-go theorem~\cite{Yang:2023ddi}, connecting Cryptographic Censorship to the complexity equals volume conjecture~\cite{Stanford:2014jda,susskind2014addendumcomputationalcomplexityblack} through the complexity gap $\Delta\mathcal{C}=N\cdot\Delta\ell$.

The remainder of this paper is organized as follows. In \cref{sec:preliminaries} we review the SYK model, JT gravity, and the Cryptographic Censorship framework, establishing notation and assumptions. \Cref{sec:syk_pru} contains the technical core: the frame potential analysis (\cref{sec:frame_potential}), the planted-SYK conjecture and its evidence (\cref{sec:planted_syk}), and the GPRU theorem (\cref{sec:gpru_theorem}). In \cref{sec:crypto_censorship} we apply these results to JT gravity, proving the existence of event horizons (\cref{sec:main_theorem}) and identifying the distinguishing operator (\cref{sec:distinguishing_operator}). We conclude with a short discussion on potential research directions in \cref{sec:discussion}. The detailed Weingarten calculus derivations are collected in Appendix~\ref{app:weingarten}, the Hamiltonian simulation analysis appears in Appendix~\ref{app:trotter}, and the computation of prediction error for the interior length is included in Appendix~\ref{app:prediction}.

\section{Preliminaries}
\label{sec:preliminaries}
This section collects the background material needed for the rest of the paper. We review the SYK model and the spectral properties that underpin the frame potential analysis of \cref{sec:frame_potential}, summarize the JT gravity / SYK holographic dictionary used in \cref{sec:crypto_censorship}, and recall the definitions of pseudorandom unitaries and the Cryptographic Censorship theorem of~\cite{Engelhardt:2024hpe}.

\subsection{The SYK model and its spectral properties}
\label{sec:syk_review}
The Sachdev--Ye--Kitaev model~\cite{Sachdev:1992fk,Kitaev:2015,Maldacena:2016hyu} describes $N$ Majorana fermions $\psi_1,\ldots,\psi_N$ satisfying $\{\psi_i,\psi_j\}=\delta_{ij}$, with the all-to-all $q$-body Hamiltonian
\begin{equation}\label{eq:H_SYK}
  \HSYK(\mathbf{J}) \;=\; i^{q/2} \!\!\sum_{1\leq i_1<\cdots<i_q\leq N} \!\!J_{i_1\cdots i_q}\,\psi_{i_1}\cdots\psi_{i_q}\,,
\end{equation}
where $q\geq 4$ is an even integer and the couplings $J_{i_1\cdots i_q}$ are independent Gaussian random variables with zero mean and variance $\langle J_{i_1\cdots i_q}^2\rangle = J^2 (q-1)!/N^{q-1}$. The Hilbert space of $N$ Majorana fermions is $\Hil_N\cong\mathbb{C}^{2^{N/2}}$, which decomposes into two sectors of definite fermion parity $(-1)^F$. Throughout this paper we write $L=2^{N/2}$ for the Hilbert space dimension; all results apply equally within a single parity sector of dimension $L/2=2^{N/2-1}$, with the $O(1)$ change in dimension absorbed into the unspecified constants in our bounds. The energy scale is set by $\sigma^2 = J^{2}(q-1)!\binom{N}{q}/N^{q-1}$, and the density of states is well-approximated by a Gaussian $\rho(E)\propto\exp(-E^2/2\sigma^2)$ in the limit of large~$N$. We set $J=1$ throughout. The spectral properties of the SYK model inspire the two key assumptions underlying the findings of this work.

\subsubsection*{Random matrix universality and Haar eigenvector statistics}

The first assumption is that the local eigenvalue statistics of the SYK Hamiltonian, within a fixed fermion parity sector, match those of the appropriate Gaussian random matrix ensemble in the bulk of the spectrum. The specific ensemble depends on the Altland--Zirnbauer symmetry class of the Hamiltonian, which for SYK$_q$ with $q=4$ is determined by $N\bmod 8$ \cite{Altland_1997}. The most consequential manifestation of this universality is the behavior of the disorder-averaged \emph{spectral form factor} (SFF),
\begin{equation}\label{eq:SFF_def}
  \SFF(T) \;=\; \bigl\langle|\tr(e^{-i\HSYK T})|^2\bigr\rangle_{\mathbf{J}}\,,
\end{equation}
which exhibits the universal slope--dip--ramp--plateau structure~\cite{Cotler:2016fpe}. At short times ($T\lesssim\Tdip$), the SFF is dominated by the disconnected part $\SFF_{\mathrm{disc}}(T)\approx L^2 e^{-\sigma^2 T^2}$, which decays as the Fourier transform of the smooth spectral density. This is followed by the dip at $\Tdip\sim\sqrt{N\ln 2}/\sigma$, a linear ramp for $\Tdip\lesssim T\lesssim\THeisen$, and finally the Heisenberg plateau $\SFF(T)\approx L$ for $T\gtrsim\THeisen$, where $\THeisen=2\pi L/(4\sigma)$ is the Heisenberg time. Although the Dyson index~$\beta$ affects the ramp coefficient for a generic random matrix ensemble, Saad, Shenker, and Stanford~\cite{Saad:2018bqo,Saad:2019lba} showed that within a fixed parity sector of SYK, the block and degeneracy structure of the various symmetry classes combines to produce a \emph{universal} ramp $\SFF(T)\approx L\cdot T/\THeisen$, independent of $N\bmod 8$. The analytical agreement between this ramp and the prediction of the JT gravity matrix integral~\cite{Saad:2019lba,Saad:2018bqo} holds to precision $O(e^{-2\Sdim})$ with $\Sdim\sim N$. We refer to the universality of the SFF ramp--plateau structure (independent of Dyson index within a parity sector) as Assumption~A1 (random matrix universality).

For our purpose, which is the Cryptographic censorship in the context of JT/SYK duality, we record the key timescales. Working in units of $1/\sigma$ and considering the Schwarzian regime relevant to the JT gravity application ($\beta J\gg 1$, i.e.\ the low-temperature regime dual to large black holes), we have the hierarchy
\begin{equation}\label{eq:hierarchy_Schw}
  1 \;\ll\; \sigma\,\Tdip \sim \sqrt{N} \;\ll\; \sigma\,\Tscr \sim N^{3/2}\log N \;\ll\; \sigma\,\THeisen \sim 2^{N/2}\,.
\end{equation}

The second assumption concerns the eigenvector statistics of the SYK model. For simple (few-body) operators $O$, the eigenstate thermalization hypothesis (ETH) constrains the matrix elements in the energy eigenbasis to the form~\cite{Sonner:2017hxc,Nayak:2019khe}
\begin{equation}\label{eq:ETH}
  \langle E_a|O|E_b\rangle \;=\; f_O(\bar{E})\,\delta_{ab} + e^{-S(\bar{E})/2}\,g_O(\bar{E},\omega)\,R_{ab}\,,
\end{equation}
where $\bar{E}=(E_a+E_b)/2$, $\omega=E_a-E_b$, $f_O$ and $g_O$ are smooth functions, and $R_{ab}$ are pseudo-random variables with approximately Gaussian statistics. However, in this work, we propose an assumption that extends the standard ETH ansatz, which we refer to as "Haar eigenvector statistics". It states that 
\begin{assumption}[Haar eigenvector statistics]\label{asm:A2}
For independent SYK$_q$ realizations with $q\geq 4$ and $N$ Majorana fermions, the relative eigenbasis rotation $V=V_1^\dagger V_2$ satisfies:
\begin{itemize}[leftmargin=2em, itemsep=3pt]
\item[\textup{}] \emph{Second moment (A2a):} $\bigl|\mathbb{E}_{\mathrm{SYK}}[|V_{ab}|^2]-1/L\bigr|\leq\varepsilon_2/L$ for all $a,b$, with $\varepsilon_2=O(e^{-cN})$.
\item[\textup{}] \emph{Fourth moment (A2b):} $\bigl|\mathbb{E}_{\mathrm{SYK}}[|V_{ab}|^2|V_{a'b'}|^2]-\mathbb{E}_{\mathrm{Haar}}[|V_{ab}|^2|V_{a'b'}|^2]\bigr|\leq\varepsilon_4/L^2$ for all $a,b,a',b'$, with $\varepsilon_4=O(e^{-cN})$.
\end{itemize}
\end{assumption}
Consider two Hamiltonians $\HSYK(\mathbf{J}_1)$ and $\HSYK(\mathbf{J}_2)$ with independent couplings, diagonalized as $\HSYK(\mathbf{J}_i)=V_i D_i V_i^\dagger$. The matrix $V=V_1^\dagger V_2$, whose entries $V_{ab}=\langle E_a^{(1)}|E_b^{(2)}\rangle$ are the overlaps between the two eigenbases, enters the Weingarten calculus through its moments. Assumption~\ref{asm:A2} states that the \emph{relative eigenbasis rotation} between two independent SYK realizations is approximately Haar-distributed at low moment levels. More precisely, condition (A2a) asserts that each entry $|V_{ab}|^2$ has the Haar mean $1/L$ up to exponentially small corrections. Similarly, condition (A2b) tells us that the pairwise correlations between entries also match Haar. Together, they control the second and fourth moments of the entry distribution, which is sufficient for the frame potential analysis of \cref{sec:frame_potential} at all $k=\poly(N)$ through the factorization argument we will describe in Appendix~\ref{app:weingarten}.

Several considerations support Assumption~\ref{asm:A2}. The assumption holds exactly for the appropriate Gaussian ensemble in each symmetry class (GOE, GUE, or GSE depending on $N\bmod 8$), where $V$ is Haar-distributed on the corresponding compact group~\cite{Mehta:2004}; the $O(1/L)$ corrections between $U(L)$-, $O(L)$-, and $Sp(L/2)$-Haar expectations are subleading relative to the $O(e^{-cN})$ tolerance. Additionally, since independent SYK realizations differ by a perturbation of strength~$\sigma$ (comparable to the full spectral width), the eigenvectors are expected to be fully delocalized in each other's basis. This picture has been confirmed rigorously for Wigner matrices with deterministic deformations of comparable spectral scale~\cite{Cipolloni:2024}, where the eigenvector overlaps converge to their Haar values. Lastly, the $k=1$ frame potential formula derived from this assumption matches the direct numerical computation in the slope regime at $N=8$ ($\beta=1$), $N=10$ ($\beta=2$), and $N=12$ ($\beta=4$), as illustrated in \cref{fig:frame_potential}. The late-time $O(1)$ discrepancy decreases with~$N$, which is consistent with $\varepsilon_2=O(e^{-cN})$.
\subsubsection*{The disorder ensemble}

The object of central interest in this paper is the \emph{disorder ensemble} of time evolution operators at fixed time~$T$,
\begin{equation}\label{eq:ensemble_def}
  \Ens_T \;=\; \bigl\{U_{\mathbf{J}} = e^{-i\HSYK(\mathbf{J})T}\bigr\}_{\mathbf{J}}\,,
\end{equation}
where $\mathbf{J}$ is drawn from the standard SYK coupling distribution. This is the ensemble whose proximity to the Haar measure we quantify through the frame potential in \cref{sec:frame_potential} and whose computational indistinguishability we conjecture in \cref{sec:planted_syk}. It should be contrasted with the \emph{time ensemble} generated by a single Hamiltonian $\{e^{-i\HSYK(\mathbf{J}_0)T}\}_{T}$, whose approach to Haar requires the much longer Heisenberg time $\THeisen$~\cite{Cotler:2016fpe,Cotler:2017jue}. The disorder ensemble achieves approximate $k$-design status parametrically faster because the disorder-averaged trace $\langle\tr(e^{-i\HSYK T})\rangle_{\mathbf{J}}=L\,e^{-\sigma^2 T^2/2}$ decays as the characteristic function of the Gaussian spectral density, causing the disconnected SFF to reach $O(1)$ on the timescale $\Tdip\sim\sqrt{N\ln 2}/\sigma$ rather than requiring the Heisenberg time $\THeisen\sim 2^{N/2}/\sigma$ needed for the time ensemble of a single realization.

\subsection{JT gravity and the holographic dictionary}
\label{sec:jt_review}

Jackiw--Teitelboim gravity~\cite{Jackiw:1984je,Teitelboim:1983ux} is a theory of two-dimensional dilaton gravity with action
\begin{equation}\label{eq:JT_action}
  I_{\mathrm{JT}} \;=\; -\Sdim\,\chi(\mathcal{M}) - \frac{1}{2}\int_{\mathcal{M}}\!\sqrt{g}\;\phi\,(R+2) - \int_{\partial\mathcal{M}}\!\sqrt{h}\;\phi\,(K-1)\,,
\end{equation}
where $\phi$ is the dilaton, $R$ is the Ricci scalar, $K$ is the extrinsic curvature at the asymptotic boundary, and $\Sdim$ is the extremal entropy. The equations of motion enforce $R=-2$ (locally AdS$_2$), so all solutions share the same local geometry and differ only in their global causal structure, which is determined by the dilaton profile.

JT gravity arises as the universal low-energy sector of near-extremal black holes~\cite{Maldacena:2016upp,Jensen:2016pah}, and its boundary dynamics is governed by the Schwarzian theory with action $I_{\mathrm{Schw}}=-C\int d\tau\,\{f(\tau),\tau\}$, where $f(\tau)$ is the boundary reparametrization mode, $\{f,\tau\}$ denotes the Schwarzian derivative, and $C=N\alpha_S/J$ is the coupling constant, with $\alpha_S$ a $q$-dependent numerical coefficient determined by UV-IR matching~\cite{Maldacena:2016hyu,Kitaev:2017awl}. At finite~$N$, this Schwarzian sector is the low-energy limit of the SYK model~\cite{Maldacena:2016hyu,Kitaev:2017awl}, establishing the JT/SYK holographic duality.

The implementation of Cryptographic Censorship necessitates that the bulk and boundary theories meet certain structural assumptions, as outlined in~\cite{Engelhardt:2024hpe}. We now highlight and verify each of these in the JT/SYK context.

\subsubsection*{Bulk Hilbert space structure}
We require that the bulk Hilbert space decompose as a direct sum $\Hil=\bigoplus_g\Hil_g$. Here, each sector $\Hil_g$ describes perturbative quantum field theory on a fixed classical background labeled by~$g$, and operators preserve this decomposition, meaning that $\BH=\bigoplus_g\mathcal{B}(\Hil_g)$. In JT gravity, the classical solutions are labeled by the horizon value of the dilaton $\phi_h$, which determines the black hole mass. Each such solution defines a hyperbolic disk geometry with a specific dilaton profile, and the corresponding Hilbert space $\Hil_g$ consists of perturbative matter excitations on that background (or is one-dimensional for pure JT gravity). States in different mass sectors have exponentially suppressed overlap $\sim e^{-\Sdim}$~\cite{Saad:2018bqo,Saad:2019lba}, so the direct-sum structure holds up to corrections that vanish in the large-$N$ limit. Cross-sector amplitudes, computed by the Saad--Shenker--Stanford matrix integral as topology-changing contributions, are suppressed as $O(e^{-2\Sdim})$.

\subsubsection*{The bulk-to-boundary code and the code subspace}\label{sec:jt_gravity}
The next assumption posits the existence of a sequence of bounded linear maps $V_N:\Hil\to\Hil_N$ satisfying $\lim_{N\to\infty}\langle\psi|V_N^\dagger V_N|\varphi\rangle=\langle\psi|\varphi\rangle$ for all $|\psi\rangle,|\varphi\rangle\in\Hil$. This is the statement that the holographic code is \emph{asymptotically isometric}~\cite{Faulkner:2022ada,leutheusser2024subregionsubalgebradualityemergencespace}. In other words, the encoding preserves inner products in the large-$N$ limit, with corrections that vanish as $N\to\infty$. In the JT/SYK duality, the map $V_N$ has been constructed through several complementary approaches. In the energy eigenbasis, $V_N$ sends bulk energy eigenstates $|E\rangle_{\mathrm{bulk}}$ to the corresponding SYK eigenstates $|E_a(N)\rangle_{\mathrm{SYK}}$, with the discrete SYK spectrum approaching the continuous JT gravity spectrum at large~$N$. 

In this work, we take the code subspace to be the microcanonical window
\begin{equation}\label{eq:code_subspace}
  \Hcode \;=\; \mathrm{span}\bigl\{|E_a\rangle : |E_a - E_0|<\Delta E\bigr\}\,,
\end{equation}
with $E_0$ in the bulk of the SYK spectrum (i.e., $|E_0|\lesssim\sigma$, where $\sigma\sim\sqrt{N/q}$ is the energy scale defined in \cref{sec:syk_review}) and width $\Delta E$ chosen so that $\Hcode$ contains exponentially many states while remaining a proper subspace of~$\Hil_N$. The dimension of the code subspace is $\dim\Hcode\sim\rho(E_0)\cdot\Delta E$, where $\rho(E_0)\sim L/\sigma$ is the density of states in the bulk of the spectrum. Typical states in this window are dual to JT black holes with horizon dilaton value $\phi_h\propto\sqrt{E_0-E_{\mathrm{gs}}}$, where $E_{\mathrm{gs}}$ is the ground state energy~\cite{Maldacena:2016upp}.

Moreover, we need that that bulk and boundary time evolution agree on the code subspace. Explicitly, we have $V_N^\dagger\, e^{-iH_N t}\,V_N = e^{-iH_{\mathrm{bulk}}t}$, where $H_N=\HSYK$ and $H_{\mathrm{bulk}}$ is the bulk ADM Hamiltonian. Since the boundary graviton mode of JT gravity is the Schwarzian, whose Hamiltonian $\HSchw$ equals the ADM Hamiltonian~\cite{Maldacena:2016upp}, and the Schwarzian theory is the low-energy limit of SYK, this identity holds in the energy eigenbasis with corrections of order $1/N$ to the individual eigenvalues. For boundary times $t=O(N^0)$, the accumulated error $\|V_N^\dagger\,e^{-iH_N t}\,V_N-e^{-iH_{\mathrm{bulk}}t}\|$ is $O(1/N)$ from perturbative corrections beyond the Schwarzian, with non-perturbative corrections of order $O(e^{-\Sdim})$~\cite{Saad:2019lba}.

\subsection{Pseudorandom unitaries and Cryptographic Censorship}
\label{sec:pru_review}

We now recall the complexity-theoretic definitions and the main theorem from~\cite{Engelhardt:2024hpe} that we will apply to the JT/SYK system. The central notion is that of a pseudorandom unitary ensemble---an efficiently generated family of unitaries that no polynomial-time quantum algorithm can distinguish from Haar-random.

\begin{definition}[PRU~{\cite{cryptoeprint:2018/544,Engelhardt:2024hpe}}]\label{def:pru}
An ensemble of unitaries $\mathcal{U}=\{U_k\in U(\Hil)\}_{k\in\mathcal{K}}$ indexed by a key space $\mathcal{K}$ is a \emph{pseudorandom unitary ensemble} (PRU) with security parameter~$\kappa$ if the following conditions hold. 

There exists a $\poly(\kappa)$-time quantum algorithm $Q$ such that
\[
Q(k,|\psi\rangle)=U_k|\psi\rangle
\]
for every $k\in\mathcal{K}$ and $|\psi\rangle\in\Hil$ (efficient generation). 

Moreover, for every quantum algorithm~$\mathcal{A}$ running in time $\poly(\kappa)$ and making at most $m=\poly(\kappa)$ adaptive oracle queries to $U$ and $U^\dagger$,
\begin{equation}\label{eq:pru_def}
  \bigl|\Pr_{k\leftarrow\mathcal{K}}\!\bigl[\mathcal{A}^{U_k}=1\bigr]
  - \Pr_{U\leftarrow\mu}\!\bigl[\mathcal{A}^{U}=1\bigr]\bigr|
  \;=\; \negl(\kappa)\,,
\end{equation}
where $\mu$ denotes the Haar measure on $U(\Hil)$ (computational indistinguishability).
\end{definition}

Computational indistinguishability is strictly stronger than statistical moment-matching. An $\varepsilon$-approximate unitary $k$-design satisfies \cref{eq:pru_def} for the restricted class of algorithms that query $U$ non-adaptively on a fixed $m$-partite state, but a PRU must defeat all efficient adversaries, including those employing adaptive strategies, entanglement, and arbitrary intermediate computation. The gap between these notions is precisely the gap between information-theoretic and computational security in cryptography, and bridging it for the SYK ensemble is the purpose of the planted-SYK conjecture in \cref{sec:planted_syk}.

\begin{definition}[GPRU~{\cite{Engelhardt:2024hpe}}]\label{def:gpru}
A PRU is \emph{gravitationally pseudorandom} (GPRU) if its security parameter scales polynomially with the gravitational coupling: $\kappa=\poly(\GN^{-1})$, where $\GN$ is the bulk Newton constant.
\end{definition}

In the JT/SYK setting, $\GN\sim 1/N$ and $\kappa=N/2$, so the GPRU requirement reduces to $\kappa=\poly(N)$. A GPRU also generates pseudorandom state ensembles. Concretely, if $\mathcal{U}=\{U_k\}$ is a GPRU, then $\Psi=\{U_k|\psi_0\rangle\}$ is a pseudorandom state ensemble (PRS) for any fixed $|\psi_0\rangle\in\Hcode$, meaning that $\poly(\kappa)$ copies of the state cannot be distinguished from Haar-random states by any efficient algorithm.

The Cryptographic Censorship theorem connects gravitational pseudorandomness to the geometry of the bulk spacetime. Its proof integrates a quantum learning no-go theorem~\cite{Yang:2023ddi} with a measure concentration argument on the unitary group to establish that a large region of the bulk must be causally inaccessible from the boundary. We state the theorem in the form most directly applicable to our setting, paraphrasing from~\cite{Engelhardt:2024hpe}.

\begin{theorem}[Cryptographic Censorship~{\cite{Engelhardt:2024hpe}}]\label{thm:cc_general}
Let $\Hil=\bigoplus_g\Hil_g$ be the bulk Hilbert space with asymptotically isometric code maps $V_N:\Hil\to\Hil_N$ satisfying $V_N^\dagger\,e^{-iH_N t}\,V_N=e^{-iH_{\mathrm{bulk}}t}$. Let $U_N$ be a unitary on $\Hil_N$ that is typical in a GPRU ensemble, and let $|\psi_N(t)\rangle\in\Hil_N$ be a state whose complete time evolution is dual to a strongly causal geometric bulk $(M,g)$. Suppose there exists a $\poly(\kappa)$-time, $({}$randomized${}$\,$)$ $K$-Lipschitz algorithm $\mathcal{A}$ that reconstructs the causal wedge of the asymptotic boundary $\mathscr{I}$, and that $U_N|\psi_N(t)\rangle$ is typical in the corresponding PRS.

Then $(M,g)$ contains an event horizon: $\Ncw\subsetneq\BH$.
\end{theorem}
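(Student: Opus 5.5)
We will argue by contradiction, following the strategy of~\cite{Engelhardt:2024hpe}. Suppose $(M,g)$ has no event horizon, so that the causal wedge of $\mathscr{I}$ is the whole bulk and $\Ncw=\BH$ (up to the direct-sum structure $\BH=\bigoplus_g\mathcal{B}(\Hil_g)$). The first step uses strong causality and the asymptotically isometric code maps $V_N$ to transport bulk causal-wedge reconstruction to the boundary. The hypothesis $V_N^\dagger e^{-iH_N t}V_N=e^{-iH_{\mathrm{bulk}}t}$ then implies that the reconstructed expectation values of every bulk observable are also boundary observables. Concretely, the $\poly(\kappa)$-time algorithm $\mathcal{A}$, given $\poly(\kappa)$ copies of $U_N|\psi_N(t)\rangle$, outputs accurate predictions $\langle O\rangle$ for all operators in $\Ncw=\BH$. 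So under the no-horizon assumption, $\mathcal{A}$ learns the complete state, including its complexity-sensitive features, to accuracy $\sqrt{\alpha}$.

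The second step confronts this with the learning no-go theorem of~\cite{Yang:2023ddi}. Since $U_N|\psi_N(t)\rangle$ is typical in the PRS generated by the GPRU, no $\poly(\kappa)$ algorithm can predict, to accuracy better than $\sqrt{\alpha}$, the expectation value of every bounded observable on such states. Otherwise the predictor could be turned into a distinguisher: compare its predictions against measured values on fresh copies, and for Haar-random states use the known concentrated values. This would violate the indistinguishability condition~\cref{eq:pru_def}. Hence, on average over the PRS, there exists an observable whose prediction fails.

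The third step, which I expect to be the main obstacle, upgrades the averaged failure to the specific typical state. Here the $K$-Lipschitz property of $\mathcal{A}$ is used together with Lévy-type measure concentration on $U(\Hil)$. The prediction error, viewed as a function of $U$, is Lipschitz with constant controlled by $K$ and $\poly(\kappa)$. Concentration then shows that the set of unitaries for which $\mathcal{A}$ succeeds has measure $\exp(-c\,\dim\Hcode/\poly(\kappa))$. Transferring this from Haar to the GPRU ensemble requires the indistinguishability once more, and one must check that the Lipschitz constants stay polynomial so the exponent is not destroyed. Combining the three steps, typicality of $U_N$ and of $U_N|\psi_N(t)\rangle$ contradicts successful reconstruction. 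Therefore some bulk operator is not in $\Ncw$, so $\Ncw\subsetneq\BH$ and $(M,g)$ contains an event horizon.
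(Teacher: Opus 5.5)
Your proposal is correct and follows essentially the same route as the paper's sketch. The paper chains the learning no-go bound of~\cite{Yang:2023ddi}, a trace-distance step producing a bounded observable that $\mathcal{A}$ mispredicts, and Lipschitz measure concentration for typicality; since $\mathcal{A}$ is correct on $\Ncw$, that observable lies outside $\Ncw$, which is your contrapositive ``$\Ncw=\BH$ yields a contradiction'' read in the other direction. Two cautions, neither fatal: transferring concentration from Haar to the GPRU via indistinguishability (which you rightly flag and the paper glosses over) only gives a $\negl(\kappa)$ atypical fraction, not $\exp(-c\,\dim\Hcode/\poly(\kappa))$; and the distinguisher in your step two must test a state-dependent observable such as the projector onto the predicted state, since fixed observables, whose Haar values merely concentrate, cannot separate Haar from the PRS.
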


The theorem works by contradiction, through a chain of reasoning we now sketch. The learning no-go theorem of Yang and Engelhardt~\cite{Yang:2023ddi} guarantees that no efficient algorithm can predict the full action of a pseudorandom unitary with high fidelity. Specifically, for any $\poly(\kappa)$-time algorithm $\mathcal{A}$ producing an approximation~$\Vhat$, we have 
\begin{equation}\label{eq:learning_nogo}
  \underset{\Vhat\leftarrow\mathcal{A}^{U}}{\avg}\; F\!\bigl(\Vhat|\psi\rangle,\,U|\psi\rangle\bigr) \;\leq\; 1-\alpha\,,
\end{equation}
where $F$ denotes the fidelity, $\alpha>0$ is independent of $\kappa$, and the bound holds for typical~$U$ in the GPRU and typical~$|\psi\rangle$ in the PRS. Since the fidelity falls short of unity by an $O(1)$ amount, the trace distance between the predicted and true output states is bounded below. By the operational interpretation of trace distance, there must exist an observable $Q_*$ with $\|Q_*\|_\infty\leq 1$ for which the algorithm's prediction deviates from the true expectation value. We obtain
\begin{equation}\label{eq:distinguishing_gap}
  \underset{\Vhat}{\avg}\;\bigl|\tr\bigl(Q_*\,\Vhat\,\psi\,\Vhat^\dagger\bigr)-\tr\bigl(Q_*\,U\,\psi\,U^\dagger\bigr)\bigr| \;\geq\; \sqrt{\alpha} - O\!\bigl(1/\poly(\kappa)\bigr)\,,
\end{equation}
where $\alpha>0$ is the fidelity deficit from \cref{eq:learning_nogo} and the $O(1/\poly(\kappa))$ correction absorbs the errors from approximate norm preservation and the typicality concentration bound.
If the algorithm $\mathcal{A}$ correctly reconstructs all operators in the causal wedge algebra~$\Ncw$, then $Q_*$ cannot belong to~$\Ncw$, and the causal wedge is a proper subset of the full bulk operator algebra. The boundary of the causal wedge of~$\mathscr{I}$ is, by definition, the event horizon, so its nontriviality implies the existence of a horizon. In \cref{sec:distinguishing_operator} we will identify $Q_*$ with the regularized geodesic length of the maximal interior slice ~$\ell$ and compute the gap in \cref{eq:distinguishing_gap}.

\section{SYK as a pseudorandom unitary ensemble}
\label{sec:syk_pru}

This section contains the technical core of the paper. We first prove that the SYK disorder ensemble $\Ens_T$ defined in \cref{eq:ensemble_def} is an approximate unitary $k$-design for all $k=\poly(N)$ (\cref{sec:frame_potential}), then formulate the planted-SYK conjecture and present evidence for its plausibility (\cref{sec:planted_syk}), and finally combine these into the conditional theorem that $\Ens_T$ is a GPRU (\cref{sec:gpru_theorem}).

\subsection{Frame potential and the Weingarten calculus}
\label{sec:frame_potential}

The standard diagnostic for the distance between a unitary ensemble and the Haar measure is the \emph{frame potential}. For an ensemble $\Ens$ on $U(L)$, the $k$-th frame potential is defined as
\begin{equation}\label{eq:frame_potential_def}
  \Fk(\Ens) \;=\; \underset{U_1,U_2\sim\Ens}{\mathbb{E}}\;\bigl|\tr(U_1^\dagger U_2)\bigr|^{2k}\,.
\end{equation}
A theorem of Scott~\cite{Scott:2008} (see also \cite{PhysRevA.80.012304}) guarantees that $\Fk(\Ens)\geq k!$ for any ensemble, with equality if and only if $\Ens$ is an exact unitary $k$-design. An ensemble satisfying $\Fk(\Ens)\leq k!(1+\varepsilon)$ is an $\varepsilon$-approximate $k$-design. Our goal is to evaluate \cref{eq:frame_potential_def} for $\Ens=\Ens_T$ and show that the deviation from $k!$ is negligible for all $k=\poly(N)$.

By employing the diagonalization $\HSYK(\mathbf{J}_i)=V_i D_i V_i^\dagger$ described earlier with $D_i=\mathrm{diag}(E_1^{(i)},\ldots,E_L^{(i)})$, the time evolution becomes $U_i=V_i\,\mathrm{diag}(e^{-iE_a^{(i)}T})\,V_i^\dagger$, resulting in
\begin{equation}\label{eq:trace_decomp}
  \tr(U_1^\dagger U_2) \;=\; \sum_{a,b=1}^L e^{i E_a^{(1)}T}\,e^{-i E_b^{(2)}T}\,|V_{ab}|^2\,.
\end{equation}
We compute $\Fk(T)$ by first averaging over the eigenvalues (spectral average) and then over $V$ (eigenvector average), replacing the latter with a Haar integral.

For the spectral average, we define the \emph{spectral power sums}
\begin{equation}\label{eq:spectral_power}
  p_n^{(i)}(T) \;=\; \sum_{a=1}^L e^{-in\,E_a^{(i)}\,T} \;=\; \tr\!\bigl(e^{-in\HSYK(\mathbf{J}_i)T}\bigr)\,,
\end{equation}
so that $\SFF(T)=\langle|p_1|^2\rangle_{\mathbf{J}}$, which connects directly to \cref{eq:SFF_def}. Expanding $|\tr(U_1^\dagger U_2)|^{2k}$ and performing the Haar average over $V$ requires the unitary Weingarten calculus~\cite{Collins:2006,Collins:2022}, which expresses moments of Haar-distributed matrices in terms of permutations. The general formula takes the form
\begin{equation}\label{eq:Fk_weingarten}
  \Fk(T) \;=\; \sum_{\sigma,\tau\in S_{2k}} \Wg(\sigma^{-1}\tau,\,L)\;\bigl\langle C_\sigma^{(1)}\bigr\rangle_{\!\mathbf{J}_1}\;\bigl\langle C_\tau^{(2)}\bigr\rangle_{\!\mathbf{J}_2}^{\!*}\,,
\end{equation}
where the sum runs over the symmetric group $S_{2k}$, $\Wg(\pi,L)$ is the Weingarten function for $U(L)$ at tensor power $2k$, and $C_\sigma^{(i)}$ is a spectral factor associated to the permutation $\sigma$. Each cycle $c$ of $\sigma$ contributes a factor $p_{n_c}^{(i)}(T)$ to $C_\sigma^{(i)}$, where the cycle power $n_c=\sum_{m\in c}s_m$ is determined by the sign pattern $\mathbf{s}=(\underbrace{+1,\ldots,+1}_{k},\underbrace{-1,\ldots,-1}_{k})$ that distinguishes the $k$ holomorphic from the $k$ antiholomorphic copies in $|z|^{2k}$. We collect the derivation of \cref{eq:Fk_weingarten} in Appendix~\ref{app:weingarten}.

\subsubsection*{The $k=1$ formula}
For $k=1$, the sign pattern is $\mathbf{s}=(+1,-1)$ and the symmetric group is $S_2=\{\mathrm{id},(12)\}$. The identity permutation has two singleton cycles with powers $n_1=+1$ and $n_2=-1$, giving
\begin{equation}\label{eq:k1_spectral_id}
  C_{\mathrm{id}} \;=\; p_{+1}\,p_{-1} \;=\; |p_1|^2 .
\end{equation}
The transposition $(12)$ has a single 2-cycle containing both a $+1$ and a $-1$ index, so $n_c=+1-1=0$ and
\begin{equation}\label{eq:k1_spectral_trans}
  C_{(12)} \;=\; p_0 \;=\; L\,.
\end{equation}
The Weingarten values are $\Wg(\mathrm{id},L)=1/(L^2-1)$ and $\Wg((12),L)=-1/(L(L^2-1))$. Since the SFF and the Weingarten function factorize across the two independent realizations $\mathbf{J}_1,\mathbf{J}_2$, the sum \cref{eq:Fk_weingarten} over all four $(\sigma,\tau)$ pairs evaluates to
\begin{equation}\label{eq:k1_derivation}
  \Fone(T) \;=\; \frac{\SFF^2 - 2\,\SFF\cdot L\cdot\frac{1}{L} + L^2}{L^2-1} \;=\; \frac{(\SFF-1)^2}{L^2-1}+1\,,
\end{equation}
reproducing \cref{eq:F1_intro}. The cross terms arise from the $(\sigma,\tau)=(\mathrm{id},(12))$ and $((12),\mathrm{id})$ pairs, each contributing $-\SFF\cdot L/(L(L^2-1))=-\SFF/(L^2-1)$, while the $(\sigma,\tau)=((12),(12))$ pair contributes $L^2\cdot\Wg(\mathrm{id},L)=L^2/(L^2-1)$. The formula makes transparent the relationship between approximate $1$-design status and the spectral form factor: $\Fone\to 1$ precisely when $\SFF(T)\to 1+O(\sqrt{L})$, which occurs at the beginning of the ramp regime $T\gtrsim\Tdip$. We have verified \cref{eq:F1_intro} numerically by computing both sides independently for $N=8,10,12$ with $100$--$300$ disorder realizations per system size. The result is shown in \cref{fig:frame_potential}.
\begin{figure}[t]
  \centering
  \includegraphics[width=\textwidth]{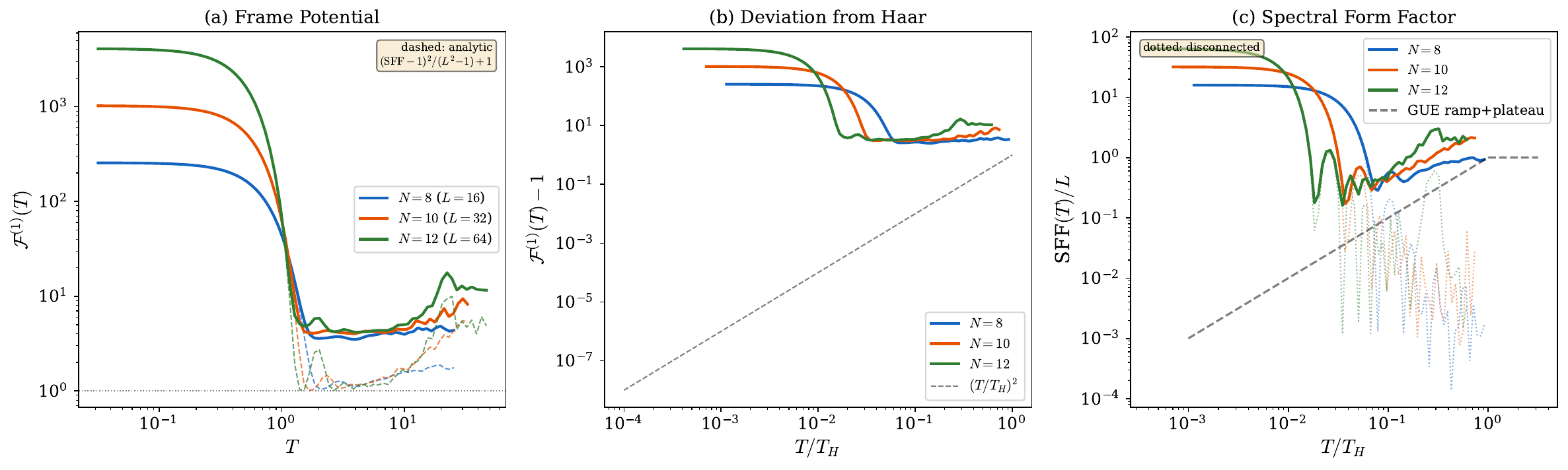}
  \caption{Numerical verification of the frame potential analysis for SYK$_4$ with $N=8,10,12$ ($L=16,32,64$), averaged over $100$--$300$ disorder realizations. \textbf{(a)}~The first frame potential $\Fone(T)$ computed directly from cross-realization averages of $|\tr(U_1^\dagger U_2)|^2$ (solid), compared with the analytic formula \cref{eq:F1_intro} evaluated from the disorder-averaged SFF (dashed). \textbf{(b)}~The deviation $\Fone(T)-1$ as a function of $T/\THeisen$, showing the approach to the $\tau^2=(T/\THeisen)^2$ scaling predicted by \cref{eq:kdesign_bound} (dashed gray line). \textbf{(c)}~The normalized spectral form factor $\SFF(T)/L$, exhibiting the universal slope--dip--ramp--plateau structure. Solid lines show the connected SFF; dotted lines show the disconnected part.}
  \label{fig:frame_potential}
\end{figure}

\subsubsection*{Paired and unpaired permutations}

To generalize to arbitrary~$k$, we classify the permutations $\sigma\in S_{2k}$ according to their spectral factors. A permutation is \emph{paired} if every cycle has zero total power ($n_c=0$), meaning each cycle contains equal numbers of $+1$ and $-1$ indices from the sign pattern~$\mathbf{s}$. For a paired permutation, $C_\sigma=L^{c(\sigma)}$, independent of the eigenvalue data, where $c(\sigma)$ is the number of cycles. An \emph{unpaired} permutation has at least one cycle with $n_c\neq 0$, so that $C_\sigma$ depends on the spectral power sums $p_{n_c}$ and decays in the ramp regime.

The key structural result is that the contribution of all paired permutations to the Weingarten sum in \cref{eq:Fk_weingarten} gives exactly the Haar frame potential:
\begin{equation}\label{eq:paired_sum}
  \sum_{\substack{\sigma,\tau\in S_{2k}\\\text{both paired}}} \Wg(\sigma^{-1}\tau,L)\,L^{c(\sigma)+c(\tau)} \;=\; k!\,.
\end{equation}
To see this, consider the Haar frame potential $\Fk_{\mathrm{Haar}}=\mathbb{E}_{W\sim\mathrm{Haar}}[|\tr W|^{2k}]=k!$ for $k\leq L$, and decompose it using the eigenbasis factorization of this section. For the Haar (CUE) eigenphase distribution, $\langle p_n\rangle_{\mathrm{CUE}}=0$ for all $n\neq 0$ by the rotational symmetry of the eigenphase distribution on the unit circle. Since the spectral factor $C_\sigma$ of any unpaired permutation contains at least one power sum $p_{n_c}$ with $n_c\neq 0$, and the two realizations are independent, the disorder average $\langle C_\sigma^{(1)}\rangle_{\mathrm{CUE}}\langle (C_\tau^{(2)})^*\rangle_{\mathrm{CUE}}$ vanishes whenever $\sigma$ or $\tau$ is unpaired. Only the paired terms survive the eigenvalue average, and their sum must equal $k!$.

\subsubsection*{The general bound}

The dominant unpaired contribution comes from permutations with the \emph{maximum} number of cycles, $c(\sigma)=k+1$: two singleton cycles of powers $n_c=+1$ and $n_c=-1$ (each consisting of a single index), together with $k-1$ paired two-cycles. There are $k\cdot k!$ such permutations, counted as $k$ choices for the positive singleton among the $k$ positive indices, $k$ choices for the negative singleton among the $k$ negative indices, and $(k-1)!$ perfect matchings of the remaining $2(k-1)$ indices into paired two-cycles. Each contributes a spectral factor $L^{(k+1)-2}\cdot\SFF = L^{k-1}\cdot\SFF$, and the leading Weingarten term ($\sigma=\tau$) gives $\Wg(\mathrm{id},L)\sim L^{-2k}$, so the net contribution is $L^{2(k-1)-2k}\cdot\SFF^2=\SFF^2/L^2=\tau^2$ per permutation. Minimally unpaired permutations with longer unpaired cycles (e.g., a three-cycle of power $+1$) also exist, but they necessarily have fewer total cycles and their contributions are suppressed by at least $1/L^2$; see Appendix~\ref{app:weingarten} for details. Summing the $k\cdot k!$ leading terms and the subleading corrections:

\begin{theorem}[Approximate $k$-design]\label{thm:kdesign}
Let $\Ens_T$ be the SYK$_q$ disorder ensemble with $q\geq 4$ and $N$ Majorana fermions, Hilbert space dimension $L=2^{N/2}$, energy scale $\sigma^2=J^2(q-1)!\binom{N}{q}/N^{q-1}$, and Heisenberg time $\THeisen=2\pi L/(4\sigma)$. Define $\tau=T/\THeisen$ and $\Tdip=\sqrt{N\ln 2}/\sigma$. Under Assumption A1~(random matrix universality) and Assumption~\ref{asm:A2}~(Haar eigenvector statistics), for any $k\leq N^a$ with $a>0$ fixed and any $T$ satisfying $\Tdip\lesssim T\leq\THeisen$,
\begin{equation}\label{eq:kdesign_bound}
  \frac{|\Fk(T)-k!|}{k!} \;\leq\; C\cdot k\cdot\tau^2 + O\!\bigl(\poly(k)\cdot e^{-cN}\bigr)\,,
\end{equation}
where $C,c>0$ are constants independent of $k$, $N$, and $T$. The first term is the spectral ramp contribution. The second term is the eigenvector correction: under (A2a) alone it takes the form $O(k\cdot e^{-cN})$, and under both (A2a) and (A2b) the polynomial prefactor in~$k$ is at most $O(k^2)$. In particular, for $k=\poly(N)$ and any $T=\poly(N)$, the right-hand side is $\negl(N)$, and $\Ens_T$ is an $\varepsilon$-approximate unitary $k$-design with $\varepsilon=\negl(N)$.
\end{theorem}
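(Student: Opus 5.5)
The plan is to start from the Weingarten decomposition \cref{eq:Fk_weingarten} and split the double sum over $(\sigma,\tau)\in S_{2k}\times S_{2k}$ into three pieces. The first piece has both $\sigma$ and $\tau$ paired. The second has exactly one of them unpaired. The third has both unpaired.

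By \cref{eq:paired_sum}, the paired--paired piece equals $k!$ exactly, so the task reduces to bounding the other two pieces relative to $k!$.

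Before doing so, I would account for the eigenvector average. \Cref{eq:Fk_weingarten} was derived by replacing the average over $V=V_1^\dagger V_2$ with an exact Haar integral. For SYK, I would instead expand $|\tr(U_1^\dagger U_2)|^{2k}$ via \cref{eq:trace_decomp} as a polynomial in the weights $|V_{ab}|^2$. I would then write
\[
\mathbb{E}_{\mathrm{SYK}}=\mathbb{E}_{\mathrm{Haar}}+\bigl(\mathbb{E}_{\mathrm{SYK}}-\mathbb{E}_{\mathrm{Haar}}\bigr).
\]
The first term gives \cref{eq:Fk_weingarten}. The second term is the eigenvector correction, which I control using Assumption~\ref{asm:A2}:
\begin{itemize}
\item (A2a) bounds single-entry deviations by $\varepsilon_2/L$;
\item (A2b) bounds pair correlations by $\varepsilon_4/L^2$.
\end{itemize}
Higher moments are reduced to these through the factorization argument of Appendix~\ref{app:weingarten}: one writes the $2k$-fold product as a product over index pairs and counts how many pairs can deviate. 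Each deviating pair is then accompanied by a Haar-like $L^{-1}$ normalization, and the counting produces an $O(k)$ or $O(k^2)$ multiplicity times $k!$. This yields the $O(\poly(k)e^{-cN})$ term.

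For the spectral pieces, the key estimate is a uniform bound on $|\langle C_\sigma\rangle_{\mathbf{J}}|$ for unpaired $\sigma$. I would proceed as follows.

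\begin{itemize}
\item Write $C_\sigma=L^{c_0(\sigma)}\prod_{c:\,n_c\neq0}p_{n_c}$, where $c_0(\sigma)$ counts the paired cycles.
\item Apply Cauchy--Schwarz together with A1 to bound each factor by $|p_{n}|\lesssim\sqrt{\SFF(nT)}$.
\item In the window $\Tdip\lesssim T\leq\THeisen$, use the ramp $\SFF(T)\approx L\tau$, with the disconnected part $L^2e^{-\sigma^2T^2}$ already $O(1)$ past the dip.
\end{itemize}

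I would then organize the unpaired permutations by their cycle deficit $2k-c(\sigma)-$(the maximal value attainable for their unpaired pattern). The leading class is the $k\cdot k!$ permutations described before the theorem. Its diagonal Weingarten term gives $k\cdot k!\cdot\SFF^2/L^2\approx k\cdot k!\,\tau^2$. The off-diagonal terms use $\Wg(\pi,L)=O(L^{-2k-|\pi|})$ and are controlled by the standard bound $\sum_\pi|\Wg(\pi,L)|\leq L^{-2k}(1-k^2/L)^{-1}$, which is harmless for $k\leq N^a\ll L^{1/2}$.

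The mixed paired--unpaired terms contribute a single factor of $\SFF/L\sim\tau$. These need separate treatment. I would argue that they appear in conjugate pairs, and that after summing over $\tau$ their sum reproduces the $-2\,\SFF$ cross term seen in \cref{eq:k1_derivation}. That term is $O(\SFF/L^2)$ relative to $k!$, so it is absorbed into $Ck\tau^2$ up to $e^{-cN}$ errors once $T\gtrsim\Tdip$.

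The main obstacle is the combinatorial control of all subleading unpaired permutations uniformly in $k$. Their number grows like $(2k)!$, while each is suppressed only by powers of $1/L$. My first attempt would be a genus-style count: bound the number of permutations whose cycle deficit is $d$ by $(2k)^{2d}k!$, and show that each such permutation carries a suppression of at least $L^{-d}$. The resulting series $\sum_d(4k^2/L)^d$ then converges for $k=\poly(N)$ and is dominated by $d=0$.

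Finally, I would note that for $T=\poly(N)$ we have $\tau=\poly(N)/2^{N/2}$. Hence both terms in \cref{eq:kdesign_bound} are $\negl(N)$, which gives the $\varepsilon$-approximate $k$-design conclusion via Scott's criterion.
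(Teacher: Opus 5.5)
Your outline follows Appendix~A essentially step for step: the paired identity, the $k\cdot k!$ leading class, and an A2 ``factorization''. It therefore inherits that argument's decisive gap, which is the eigenvector correction.

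\textbf{The eigenvector step.} Assumption~\ref{asm:A2} fixes only the degree-$(1,1)$ and $(2,2)$ moments of $V=V_1^\dagger V_2$. By contrast, $\Fk$ is a degree-$(2k,2k)$ moment of $V$, taken jointly with the two spectra. Higher moments are not determined by lower ones. Appendix~A also only asserts the $O(k\varepsilon_2+k^2\varepsilon_4)$ count; it does not derive it.

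No argument of this type can work, because the conclusion itself fails for large polynomial $k$. Suppose $\max_I|J_{1,I}-J_{2,I}|\le 1/(T\binom{N}{q})$. Then $\|U_1-U_2\|_\infty\le T\|H_1-H_2\|_\infty\le 1$. Every eigenvalue of $U_1^\dagger U_2$ then has real part at least $1/2$, so $|\tr(U_1^\dagger U_2)|\ge L/2$. This event has probability $e^{-O(N^q\log(NT))}$, hence
\[
\Fk(T)\;\ge\;(L/2)^{2k}\,e^{-O(N^q\log(NT))}\;=\;2^{k(N-2)}\,e^{-O(N^q\log(NT))}.
\]
This exceeds $k!$ by a factor $e^{\Omega(N^{q+1})}$ already for $k=N^q$ and $T=\poly(N)$.

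Such near-coincident realizations change the quantities in (A2a)--(A2b) by at most their probability. That is far below the tolerances $e^{-cN}/L^2$, so A2 is blind to them. Yet for such $k$ they dominate the $2k$-th moments. They are also exactly where $V$ is strongly correlated with $D_1,D_2$, which the master formula's treatment of $V$ as independent Haar discards. A correct statement must at least restrict to $k=O(N^{q-1}\log(NT))$. It must also assume matching of the joint $(2k,2k)$ moments, not only (A2a)--(A2b).

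\textbf{The spectral side.} This also needs repair, in three places.

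First, the identity \cref{eq:paired_sum} that you take as input is false. For $k=1$ the paired--paired term is $\Wg(\mathrm{id},L)\,L^2=L^2/(L^2-1)\neq 1$. The CUE argument fails because the cycle powers of every $\sigma$ sum to $\sum_m s_m=0$, so rotation invariance kills nothing; for example, $\langle|p_1|^2\rangle_{\mathrm{CUE}}=1$. What is true is the following: $\Fk-k!$ equals the sum, over pairs with $\sigma$ or $\tau$ unpaired, of $\Wg(\sigma^{-1}\tau,L)$ times the \emph{difference} between the SYK and CUE values of $\langle C_\sigma\rangle\langle C_\tau\rangle^*$. Numerically this is harmless, but it has to be done.

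Second, and more seriously, the $k\cdot k!$ permutations are not the only unsuppressed unpaired class. Take a permutation with $j$ singletons of power $+1$, $j$ of power $-1$, and $k-j$ paired transpositions. It has $k+j$ cycles; the identity has $2k$. With Gaussian ramp statistics $\langle|p_1|^{2j}\rangle\approx j!\,\SFF^j$, its diagonal Weingarten terms contribute $\binom{k}{j}j!\,\tau^{2j}$ relative to $k!$. There is no $1/L$ suppression, so your deficit series in $k^2/L$ never sees these terms. Summing over $j$ gives $\sum_{j\ge1}\binom{k}{j}j!\,\tau^{2j}$. This is $O(k\tau^2)$ only when $k\tau^2\lesssim 1$, and it is $\approx e\,k!$ at $\tau=1$. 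The range $T\le\THeisen$ must therefore be cut down, though this is harmless for $T=\poly(N)$.

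Third, bounding $\langle\prod_c p_{n_c}\rangle$ factor by factor via Cauchy--Schwarz is invalid for more than two factors. You need H\"older with higher moments $\langle|p_n|^r\rangle$, that is, higher-point spectral correlators. A1 is a statement about the SFF alone and does not supply these.
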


\noindent The eigenvector correction in \cref{eq:kdesign_bound} requires only the low-moment conditions (A2a) and (A2b) of Assumption~\ref{asm:A2}, not Haar statistics at the $(2k)$-th moment level. This is because the Weingarten decomposition separates the $k$-th frame potential into \emph{paired} contractions, which use only the unitarity constraint $VV^\dagger=I$ and carry no eigenvector correction, and \emph{unpaired} contractions, where each of the two unpaired singletons contributes a correction of $O(\varepsilon_2)$ from (A2a). The cross-correlations between paired and unpaired contractions are controlled by the 4th-moment condition (A2b), contributing $O(\varepsilon_4)$ per cross-term. Since there are $k\cdot k!$ leading terms with $O(k)$ cross-correlations each, the total correction normalized by $k!$ is $O(k^2\cdot e^{-cN})=\negl(N)$ for $k=\poly(N)$. A detailed derivation is given in the Appendix~\ref{app:weingarten}.

\subsection{The planted-SYK conjecture}
\label{sec:planted_syk}

Theorem~\ref{thm:kdesign} rules out all distinguishers that rely on the first $\poly(N)$ moments of the unitary distribution, but a PRU must resist all efficient adversaries, including those that exploit structural information beyond moments. To close this gap, we propose a computational hardness assumption specific to the SYK model.

\subsubsection*{The distinguishing problem}

The relevant computational problem is an instance of a planted random matrix problem: given oracle access to a unitary, determine whether it was drawn from a structured ensemble or from the Haar measure.

\begin{definition}[Planted-SYK distinguishing]\label{def:psyk}
Let $N$, $q\geq 4$, $T>0$, and $m=\poly(N)$ be given. An instance of the planted-SYK problem consists of oracle access to a unitary $U$ acting on $N/2$ qubits, drawn from one of two distributions:
\begin{equation}\label{eq:psyk_hypotheses}
  H_0:\; U\leftarrow\mu\bigl(U(L)\bigr) \qquad\text{vs.}\qquad H_1:\; U=e^{-i\HSYK(\mathbf{J})T},\;\;\mathbf{J}\sim\text{standard SYK}_q\,.
\end{equation}
A quantum algorithm $\mathcal{A}$ making at most $m$ adaptive queries to $U$ and $U^\dagger$, with $\poly(N)$-time intermediate computation, has \emph{advantage} $\varepsilon=|\Pr[\mathcal{A}^U\!=\!1\mid H_0]-\Pr[\mathcal{A}^U\!=\!1\mid H_1]|$.
\end{definition}

We now introduce a conjecture which asserts that no efficient algorithm achieves non-negligible advantage in the physically relevant time window.

\begin{conjecture}[Planted-SYK hardness]\label{conj:planted_syk}
For any $q\geq 4$ (even) and any $T$ satisfying $\Tscr(N)\leq T\leq\exp(N^\gamma)$ for some fixed $0<\gamma<1$, every quantum algorithm of $\poly(N)$ complexity with at most $m=\poly(N)$ adaptive oracle queries achieves advantage at most $\negl(N)$ in the planted-SYK problem of Definition~\ref{def:psyk}.
\end{conjecture}

\noindent The time window in Conjecture~\ref{conj:planted_syk} concerns the \emph{oracle model}. More precisely, the adversary is given black-box access to $U=e^{-i\HSYK T}$ and does not need to generate~$U$. Furthermore, the upper bound $T\leq\exp(N^\gamma)$ prevents the adversary from resolving the full eigenvalue spectrum via phase estimation (which would require $T\gtrsim\THeisen\sim\exp(\Omega(N))$), while the lower bound $T\geq\Tscr$ ensures that scrambling has erased local signatures of the Hamiltonian structure.

We now present evidence for the conjecture.

\subsubsection*{Spectral evidence}

With $m=\poly(N)$ queries to $U=e^{-i\HSYK T}$ at time $T=\poly(N)$, phase estimation determines the eigenvalues $E_a$ to precision $\delta E\sim 1/(mT)=1/\poly(N)$, far coarser than the mean level spacing $\delta\sim\sigma/L\sim\sqrt{N}/2^{N/2}$. Consequently, the adversary cannot resolve individual level spacings or detect the Dyson index ($N\bmod 8$ dependence~\cite{You:2016ldz,Cotler:2016fpe,Behrends:2019}), and the SFF ramp coefficient is in any case independent of the symmetry class within a fixed parity sector~\cite{Saad:2018bqo}. Moreover, the adversary cannot exploit the non-semicircular (Gaussian) shape of the SYK spectral density, because it observes eigenphases $\theta_a=E_a T\!\!\pmod{2\pi}$ that wrap around the unit circle $O(\sigma T)$ times, rendering the phase density approximately uniform. Furthermore, the two-point eigenvalue correlator matches random matrix predictions to $O(e^{-2\Sdim})$ through the JT gravity matrix integral~\cite{Saad:2018bqo,Saad:2019lba}. Likewise, the eigenvector statistics are expected to match Haar with corrections exponentially small in~$N$~\cite{Sonner:2017hxc,Nayak:2019khe,Gharibyan:2018jrp}, as formalized in Assumption~\ref{asm:A2}.

\subsubsection*{The low-degree polynomial framework}

Our next rationale for motivating this conjecture comes from the low-degree polynomial (LDP) framework~\cite{Hopkins:2017cle,Kunisky:2019bwz}. LDP predicts computational phase transitions in hypothesis testing by bounding the power of degree-$D$ polynomial tests on the observed data. We apply it to the classical eigenphase data extracted by phase estimation. The degree-$d$ component of the likelihood ratio between $H_1$ (SYK) and $H_0$ (CUE) probes the $d$-point connected eigenphase correlator. By the genus expansion of the JT gravity matrix integral~\cite{Saad:2018bqo,Saad:2019lba}, the leading correction to the disk-level correlator at genus one contributes at order $O(e^{-d\,c\,N})$ for the $d$-point function, and the cumulant expansion introduces a combinatorial prefactor of $O(d!)$, giving the formal estimate $|\Delta R_d^{\mathrm{conn}}|\lesssim d!\cdot e^{-d\,c\,N}$. The number of degree-$d$ monomials in $M=\poly(N)$ eigenphases is $O(\poly(N)^d)$, so
\begin{equation}\label{eq:ldp_bound}
  \mathrm{SNR}_D \;\leq\; \exp\!\bigl(O(D\log N)-D\,c\,N\bigr)\,.
\end{equation}
For $D=O(\log N)$, this gives $\mathrm{SNR}_D\leq\exp(-\Omega(N\log N))$. By the LDP meta-conjecture---validated for planted clique, sparse PCA, and community detection---failure of degree-$O(\log n)$ tests predicts failure of all polynomial-time algorithms. However, it should be emphasized that the LDP framework is classical and does not directly bound quantum oracle algorithms; extending it to the quantum setting remains an important open problem. In light of this crucial fact, we treat this analysis as complementary (or speculative) evidence for Conjecture~\ref{conj:planted_syk}, alongside the spectral universality and $k$-design results.

\subsubsection*{Comparison with established planted problems}

To calibrate the strength of the evidence, we compare the planted-SYK problem with two benchmark problems whose computational hardness is widely believed. In the \emph{planted clique} problem, a clique of size~$k$ hidden in $G(n,1/2)$ produces a rank-one perturbation of the adjacency matrix with an outlier eigenvalue detectable by PCA when $k\geq\sqrt{n}$; below this threshold, no polynomial-time algorithm is known and the LDP framework confirms $\mathrm{SNR}=o(1)$~\cite{Hopkins:2017cle}. Additionally, in the \emph{spiked random matrix} (sparse PCA) problem, a rank-one perturbation $\lambda vv^\top$ added to a Wigner matrix produces an outlier eigenvalue above the Baik--Ben~Arous--P\'{e}ch\'{e} threshold at $\lambda=1$~\cite{Baik:2005}; for sparse~$v$ and $\lambda<1$, the problem is conjectured to be computationally hard despite being information-theoretically solvable. In both cases, the planted-SYK problem lacks the structural features that enable tractability: the spectral deviation from CUE is full-rank, has amplitude $O(e^{-cN})$, and is spread uniformly across the spectrum with no outlier or localized signal. Therefore, the LDP bound \cref{eq:ldp_bound} indicates that the signal is exponentially weaker than in either benchmark at their respective computational thresholds.

\subsection{SYK as a candidate GPRU under planted-SYK hardness}\label{sec:gpru_theorem}
We now combine \cref{thm:kdesign} with Conjecture~\ref{conj:planted_syk} to establish the main result of this section: the SYK disorder ensemble is a GPRU.

\begin{theorem}[SYK as a GPRU]\label{thm:gpru}
Assume A1~(random matrix universality), Haar eigenvector statistics~(A2), and Conjecture~\ref{conj:planted_syk} (planted-SYK hardness). For any $q\geq 4$ (even) and any $T=\poly(N)$ satisfying $T\geq\Tscr(N)$, the SYK disorder ensemble $\Ens_T=\{e^{-i\HSYK(\mathbf{J})T}\}_{\mathbf{J}\in\mathcal{K}}$ is a gravitationally pseudorandom unitary ensemble in the sense of Definition~\ref{def:gpru}, with security parameter $\kappa=N/2$.
\end{theorem}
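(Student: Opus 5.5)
The proof checks the two clauses of Definition~\ref{def:pru} separately: efficient generation, then computational indistinguishability. The GPRU scaling of Definition~\ref{def:gpru} then follows from the dictionary entries $\GN\sim 1/N$ and $\kappa=N/2$. The key space $\mathcal{K}$ is the set of coupling vectors $\mathbf{J}$, truncated to $\poly(N)$ bits of precision per coupling. There are $\binom{N}{q}=O(N^q)$ of them, so each key has $\poly(\kappa)$ length. This truncation changes $U_{\mathbf{J}}$ by $\negl(N)$ in operator norm once $T=\poly(N)$. Each coupling is Gaussian and can be sampled to this precision in $\poly(N)$ time.

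\textbf{Efficient generation.} I will build $Q(\mathbf{J},|\psi\rangle)$ by Jordan--Wigner encoding the $N$ Majoranas into $N/2$ qubits. Each term $\psi_{i_1}\cdots\psi_{i_q}$ becomes a Pauli string of weight $O(N)$, so $\HSYK$ is a sum of $O(N^q)$ Pauli strings. I then invoke the Hamiltonian simulation analysis of Appendix~\ref{app:trotter}. A $p$-th order Trotter--Suzuki formula, or alternatively qubitization / LCU with $\lambda=\sum|J_{i_1\cdots i_q}|=\poly(N)$, gives gate complexity $\poly(N,T,1/\epsilon)$, or $\widetilde O(\lambda T\log(1/\epsilon))$, for accuracy $\epsilon$. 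Choosing $\epsilon=\negl(N)$, for instance $2^{-N}$, keeps the cost $\poly(N)$ for $T=\poly(N)$. The negligible implementation error is absorbed into the distinguishing advantage by a hybrid argument: $m$ queries each perturbed by $\epsilon$ change any acceptance probability by at most $2m\epsilon=\negl(N)$. I will also note that $U^\dagger=e^{+i\HSYK T}$ is generated the same way, which the adaptive oracle model requires.

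\textbf{Indistinguishability.} Definition~\ref{def:pru} requires that every $\poly(\kappa)$-time adversary making $\poly(\kappa)$ adaptive queries to $U,U^\dagger$ has $\negl(\kappa)$ advantage. Since $\kappa=N/2$, the classes $\poly(\kappa)$ and $\poly(N)$ coincide, and the statement is literally Conjecture~\ref{conj:planted_syk} restricted to the window $\Tscr\le T\le\exp(N^\gamma)$. This window contains every $T=\poly(N)$ with $T\ge\Tscr$, for any fixed $\gamma\in(0,1)$. Theorem~\ref{thm:kdesign} supplies the information-theoretic part: by \cref{eq:kdesign_bound}, $\tau=T/\THeisen=\poly(N)2^{-N/2}$, so non-adaptive, moment-based distinguishers are ruled out unconditionally. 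This shows that the conjecture is consistent with the statistics and is only needed for adaptive and structured adversaries. One further point needs care: Definition~\ref{def:pru} compares the ensemble with Haar on $U(\Hil)$, while $\HSYK$ preserves fermion parity and is therefore block diagonal. I will handle this as the paper's conventions already do, by stating the result per parity sector (dimension $L/2$) or by composing with a public parity-mixing unitary. I will note that the conjecture as formulated is taken to include this convention.

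\textbf{Main obstacle.} The step I expect to be delicate is the parity and symmetry-class issue just mentioned. An adversary who measures $(-1)^F$ before and after a query can detect block diagonality with $O(1)$ queries, since Haar unitaries do not commute with $(-1)^F$. This is not an artifact of the proof but a real distinguisher. So the honest form of the theorem must restrict the Haar comparison to the parity-sector unitary group, or to $\Hcode$, and I will make that restriction explicit rather than absorb it into constants. The remaining steps, namely bookkeeping of simulation errors and matching $\poly(\kappa)$ with $\poly(N)$, are routine.
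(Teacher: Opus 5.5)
Your core argument is the paper's proof. The keys are the $\binom{N}{q}$ couplings discretized to $\poly(N)$ bits, and generation is by Hamiltonian simulation (Appendix~\ref{app:trotter}). Indistinguishability is read off directly from Conjecture~\ref{conj:planted_syk}, since every $T=\poly(N)$ with $T\geq\Tscr$ lies in its window, and $\kappa=N/2=\poly(\GN^{-1})$. Your handling of simulation error is tighter than the paper's. The paper runs first-order Trotter at $\varepsilon=1/\poly(N)$, which by the hybrid bound you quote can shift the advantage by up to $m\varepsilon=1/\poly(N)$, a non-negligible amount. A method with $\log(1/\epsilon)$ cost (qubitization/LCU) genuinely permits $\epsilon=2^{-N}$. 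A fixed-order Trotter formula, whose cost is $\poly(1/\epsilon)$, does not, so only the second option in your sentence works.

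Your parity remark is correct, and the paper does not confront it. Posed against Haar on all of $U(L)$, Conjecture~\ref{conj:planted_syk} is refuted by a one-query $(-1)^F$ test, so the theorem as stated holds only as a formal consequence of a false premise. Your repair, however, does not go through as written.
\begin{itemize}
\item Composing with a \emph{public} parity-mixing unitary $G$ achieves nothing. The adversary applies $G^\dagger$ after each query: under the SYK hypothesis this recovers a parity-preserving $e^{-i\HSYK T}$, while under Haar it is still Haar.
\item Restricting to a parity sector does not remove all symmetry-based distinguishers. For $q=4$, the particle--hole operator $P=WK$ commutes with $\HSYK(\mathbf{J})$ for every real $\mathbf{J}$. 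Here $W=\prod_j\sqrt{2}\,\psi_{2j-1}$ is a known Pauli string under Jordan--Wigner, and $K$ is complex conjugation in the computational basis.
\end{itemize}
The particle--hole symmetry gives $PUP^{-1}=U^\dagger$, i.e.\ $U^T=W^\dagger U W$. Apply $U$ to one half of a maximally entangled state $|\Omega\rangle$ and $W^\dagger U^\dagger W$ to the other half. This returns $|\Omega\rangle$ with certainty for every SYK realization, whereas for Haar-random $U$ the return probability is $O(L^{-2})$ on average. For $N\bmod 8\in\{0,4\}$, $P$ preserves fermion parity, so this two-query test runs inside a single sector and survives your restriction.

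Because the acceptance probability is a $(2,2)$-moment statistic, the same test also undercuts your appeal to Theorem~\ref{thm:kdesign} as ruling out non-adaptive moment-based distinguishers in those symmetry classes. A non-vacuous version must therefore at least exclude the $\beta=1,4$ classes, or compare against a symmetry-adapted measure, which takes you outside Definition~\ref{def:pru}. The parity restriction alone is not the whole fix.
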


\begin{proof}
We proceed by verifying the requirements of Definition~\ref{def:pru}. For efficient generation, the key $k\in\mathcal{K}$ encodes the $\binom{N}{q}$ SYK couplings $\mathbf{J}$ discretized to $B=\poly(N)$ bits of precision. Given~$k$ and a quantum state~$|\psi\rangle$, the algorithm $Q(k,|\psi\rangle)=e^{-i\HSYK(\mathbf{J})T}|\psi\rangle$ is implemented by Trotter--Suzuki decomposition with circuit depth $O(T^2 N^{2q}/\varepsilon)$, which is $\poly(N)$ for $T=\poly(N)$ and $\varepsilon=1/\poly(N)$ (see Appendix~\ref{app:trotter}). The discretization error $\|U_{\mathbf{J}}-U_{\tilde{\mathbf{J}}}\|_\infty\leq T\cdot N^q\cdot 2^{-B}=\negl(N)$ is also negligible.  

For computational indistinguishability, since $T=\poly(N)$ and $T\geq\Tscr$, the time~$T$ lies in the window $[\Tscr,\exp(N^\gamma)]$ of Conjecture~\ref{conj:planted_syk} for any $\gamma>0$. The conjecture then directly implies that every $\poly(N)$-time adversary achieves advantage $\negl(N)$, which is condition \cref{eq:pru_def} of Definition~\ref{def:pru}. Also, since $\kappa=N/2$ and $\GN\sim 1/N$, we have $\kappa=\poly(\GN^{-1})$, which satisfies Definition~\ref{def:gpru}. Accordingly, the ensemble $\Psi_T=\{e^{-i\HSYK(\mathbf{J})T}|\psi_0\rangle\}_{\mathbf{J}\in\mathcal{K}}$ is a pseudorandom state ensemble for any fixed initial state $|\psi_0\rangle\in\Hcode$.
\end{proof}

We now record immediate corollaries that will also be utilized in \cref{sec:crypto_censorship}.

\begin{corollary}[Learning hardness]\label{cor:learning}
Under the assumptions of Theorem~\ref{thm:gpru}, the SYK time evolution satisfies the learning no-go theorem of Yang and Engelhardt~\cite{Yang:2023ddi}: for any $\poly(N)$-time quantum algorithm $\mathcal{A}$ with oracle access to $U_{\mathbf{J}}=e^{-i\HSYK(\mathbf{J})T}$,
\begin{equation}\label{eq:fidelity_bound}
  \underset{\mathbf{J},\,\psi,\,\Vhat\leftarrow\mathcal{A}^{U_{\mathbf{J}}}}{\avg}\; F\!\bigl(\Vhat|\psi\rangle,\,U_{\mathbf{J}}|\psi\rangle\bigr) \;\leq\; 1-\alpha\,,
\end{equation}
where $\alpha>0$ is a constant independent of~$N$.
\end{corollary}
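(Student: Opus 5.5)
The plan is to derive Corollary~\ref{cor:learning} as a direct specialization of the learning no-go theorem of~\cite{Yang:2023ddi}, whose only hypotheses are that the unitary is drawn from a PRU ensemble with security parameter~$\kappa$ and that the input state is drawn from the associated PRS. Theorem~\ref{thm:gpru} supplies both: $\Ens_T$ is a GPRU with $\kappa=N/2$, and $\Psi_T=\{U_{\mathbf{J}}|\psi_0\rangle\}$ is a PRS for any fixed $|\psi_0\rangle\in\Hcode$. Since $\poly(N)=\poly(\kappa)$, every $\poly(N)$-time algorithm $\mathcal{A}$ in the statement is an admissible adversary in the sense of Definition~\ref{def:pru}.

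The argument itself runs by contradiction through a reduction, following the template of~\cite{Yang:2023ddi}. Suppose some $\poly(N)$-time $\mathcal{A}$ outputs $\Vhat$ with average fidelity $1-\alpha(N)$, where $\alpha(N)\to 0$ along a subsequence. From $\mathcal{A}$ I would build a distinguisher $\mathcal{D}$ for the planted-SYK problem of Definition~\ref{def:psyk}. First, $\mathcal{D}$ runs $\mathcal{A}^U$ to obtain the classical description of $\Vhat$. It then prepares a fresh test state $|\psi\rangle$, queries $U$ once more on it, and performs the swap test (or the projective measurement onto $\Vhat|\psi\rangle$) against $\Vhat|\psi\rangle$. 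Under $H_1$ the acceptance probability is at least $1-\alpha(N)$. Under $H_0$ it is bounded by the Haar learning bound: for Haar $U$, a $\poly$-query learner achieves average fidelity at most $\poly(N)/L+O(1/L)$, because the output on a fresh Haar-random input is nearly independent of the $m$ query transcripts. This follows from the standard bound $m/L$ on predicting unseen directions of a Haar unitary, or from the $k$-design property of Theorem~\ref{thm:kdesign} applied to the $(m+1)$-query moment. The resulting advantage is $1-\alpha(N)-\negl(N)$, which is non-negligible. This contradicts Conjecture~\ref{conj:planted_syk}, equivalently the indistinguishability clause of Theorem~\ref{thm:gpru}. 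Hence $\alpha$ is bounded below by an $N$-independent constant; in fact any $\alpha<1-\negl(N)$ suffices, for instance $\alpha=1/2$.

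Two technical points remain. First, the averages over $\mathbf{J}$ and $\psi$ must be passed through the reduction. Here I would use concentration of measure (Lévy's lemma on $U(L)$, transferred via the $k$-design property) to turn the average statement into one about typical $U_{\mathbf{J}}$ and $\psi$, which is the form quoted in \cref{eq:learning_nogo}. Second, the test state must be efficiently preparable, so $\psi$ should be sampled from a PRS (e.g.\ $\Psi_T$ with an independent key) rather than exactly Haar. I expect this step to be the main obstacle. The Haar-side fidelity bound requires $\psi$ to look random to the learner, so replacing it by a pseudorandom state needs a hybrid argument that invokes the PRS property a second time. Restricting $\psi$ to $\Hcode$ instead of all of $\Hil_N$ also requires checking that $\dim\Hcode$ is still exponentially large, so that the $\poly(N)/\dim\Hcode$ Haar bound remains negligible.
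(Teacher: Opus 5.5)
Your proposal is correct and matches the paper's argument: the corollary is just the Yang--Engelhardt no-go \cref{eq:learning_nogo} applied to $\Ens_T$ once Theorem~\ref{thm:gpru} makes it a GPRU with $\kappa=N/2$, and the learner-to-distinguisher reduction you sketch is that cited theorem's proof written out (fidelity test against $\Vhat|\psi\rangle$, Haar side bounded by the standard $O(m/L)$ learning bound, which needs nothing from Theorem~\ref{thm:kdesign}). The obstacles you anticipate do not arise for this averaged statement: no concentration step is needed, and because $\Vhat$ is fixed before $\psi$ is drawn, the fidelity $|\langle\psi|\Vhat^\dagger U|\psi\rangle|^2$ is quadratic in $|\psi\rangle\langle\psi|$, so any efficiently preparable state 2-design (e.g.\ random stabilizer states) reproduces the Haar average exactly, with no PRS hybrid and no need to prepare states in the $\mathbf{J}$-dependent window $\Hcode$, which the distinguisher could not do anyway.
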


This is the specialization of \cref{eq:learning_nogo} to the SYK ensemble, and it ensures that the fidelity gap needed for Cryptographic Censorship is present.

\begin{corollary}[Distinguishing operator]\label{cor:distinguishing}
Under the assumptions of Theorem~\ref{thm:gpru}, for any $\poly(N)$-time algorithm $\mathcal{A}$ that correctly predicts all operators in the causal wedge algebra $\Ncw$, there exists an observable $Q_*$ with $\|Q_*\|_\infty\leq 1$ and $Q_*\notin\Ncw$ such that
\begin{equation}\label{eq:Q_star_gap}
  \underset{\Vhat\leftarrow\mathcal{A}^{U}}{\avg}\;\bigl|\tr\!\bigl(Q_*\,\Vhat\,\psi\,\Vhat^\dagger\bigr)-\tr\!\bigl(Q_*\,U\,\psi\,U^\dagger\bigr)\bigr| \;\geq\; \sqrt{\alpha} - O\!\bigl(1/\poly(N)\bigr)\,.
\end{equation}
\end{corollary}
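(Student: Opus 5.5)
The proof will go through Corollary~\ref{cor:learning}: we turn its fidelity deficit into a lower bound on an expectation-value gap using Fuchs--van de Graaf, and then place the resulting operator outside $\Ncw$ by contradiction.

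\textbf{Step 1: from fidelity to trace distance.} Fix $U=U_{\mathbf{J}}$ typical in $\Ens_T$ and $|\psi\rangle$ typical in $\Psi_T$. By Corollary~\ref{cor:learning}, $\avg_{\Vhat}F(\Vhat|\psi\rangle,U|\psi\rangle)\leq 1-\alpha$. Both output states are pure, so the Fuchs--van de Graaf relation holds with equality for each realization $\Vhat$:
\[
\tfrac12\|\Vhat\psi\Vhat^\dagger-U\psi U^\dagger\|_1=\sqrt{1-F}.
\]
The function $x\mapsto\sqrt{1-x}$ is concave, but we need a lower bound on the average of $\sqrt{1-F}$, so Jensen's inequality points the wrong way. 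Instead we use $\sqrt{1-F}\geq 1-F$ for $F\in[0,1]$. This gives an average trace distance of at least $\alpha$. Alternatively, we can work with a conditioned event: by Markov's inequality, $1-F\geq\alpha/2$ holds with probability at least $\alpha/2$, and on that event the trace distance is at least $\sqrt{\alpha/2}$.

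To get the exact constant $\sqrt{\alpha}$ as stated, we would follow \cref{eq:distinguishing_gap}. There the averaging is taken inside, using the averaged output state $\bar\rho=\avg_{\Vhat}\Vhat\psi\Vhat^\dagger$. Fidelity with a pure state is linear in the other argument, so $F(\bar\rho,U\psi U^\dagger)=\avg F\leq 1-\alpha$. Then $\tfrac12\|\bar\rho-U\psi U^\dagger\|_1\geq 1-\sqrt{F}\geq \alpha/2$, or at least $\sqrt{\alpha}$ through the pure-state bound. Any $O(1)$ loss in the constant is absorbed by redefining $\alpha$. The $\poly(N)^{-1}$ correction accounts for three things: the asymptotic isometry error of $V_N$ in \cref{sec:jt_gravity}, the Trotter and discretization errors from the proof of Theorem~\ref{thm:gpru}, and the concentration needed to pass from averages to typical $U,\psi$.

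\textbf{Step 2: the Helstrom witness.} Set $\Delta=\bar\rho-U\psi U^\dagger$ and let $Q_*=P_+-P_-$ be the difference of its positive and negative spectral projectors. Then $\|Q_*\|_\infty\leq1$ and $\tr(Q_*\Delta)=\|\Delta\|_1$. For the statement with the average outside the absolute value, we note $|\avg\,\tr(Q_*\cdot)|\leq\avg|\tr(Q_*\cdot)|$. This gives \cref{eq:Q_star_gap}.

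\textbf{Step 3: $Q_*\notin\Ncw$.} Suppose instead $Q_*\in\Ncw$. By hypothesis, $\mathcal{A}$ correctly predicts every operator in $\Ncw$, meaning it outputs $\tr(Q\,U\psi U^\dagger)$ up to $1/\poly(N)$ error for each $Q\in\Ncw$. This contradicts the $O(1)$ gap for $N$ large.

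\textbf{Main obstacle.} The delicate point is that "correctly predicts $\Ncw$" must hold uniformly, including for the state-dependent operator $Q_*$, and this is what the $K$-Lipschitz hypothesis of Theorem~\ref{thm:cc_general} supplies. I would first use this Lipschitz property together with measure concentration on $U(L)$ (Lévy's lemma) to show that the prediction error for every $Q\in\Ncw$ concentrates at $1/\poly(N)$ simultaneously over typical $U$. I would then apply the contradiction to that concentrated set.
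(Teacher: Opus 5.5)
Your skeleton matches the paper's: Corollary~\ref{cor:learning}, then Fuchs--van de Graaf, then a Helstrom witness, then a contradiction with correct prediction on $\Ncw$. Your averaged-state device is actually cleaner than the paper's one-line argument, because it produces a single $\Vhat$-independent $Q_*$, as the quantifier order of the statement requires. The gap is the constant. Your claim that $\sqrt{\alpha}$ follows ``through the pure-state bound'' is false: $\tfrac12\|\rho-\rho'\|_1=\sqrt{1-F}$ needs both states pure, and $\bar\rho$ is mixed. For the pure target $\rho_U:=U\psi U^\dagger$ the right inequality is $\tfrac12\|\bar\rho-\rho_U\|_1\geq 1-\tr(\rho_U\bar\rho)=1-\avg F\geq\alpha$, which is better than your $\alpha/2$. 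It is also tight, e.g.\ for $\bar\rho=(1-\alpha)\rho_U+\alpha|\chi\rangle\langle\chi|$ with $\langle\chi|U|\psi\rangle=0$. So Steps 1--2 establish $\avg|\cdots|\geq 2\alpha-O(1/\poly(N))$, not $\sqrt{\alpha}-O(1/\poly(N))$. ``Redefining $\alpha$'' changes the statement, since $\alpha$ is the constant fixed by Corollary~\ref{cor:learning}. No cleverer witness rescues this from an averaged fidelity bound alone. Suppose $\mathcal{A}$ outputs an exact $\Vhat$ with probability $1-\alpha$, and with probability $\alpha$ one for which $\Vhat|\psi\rangle\perp U|\psi\rangle$ while all $\Ncw$ expectation values still agree. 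Then $\avg F=1-\alpha$, yet every $Q$ with $\|Q\|_\infty\leq1$ gives $\avg|\cdots|\leq 2\alpha<\sqrt{\alpha}$ whenever $\alpha<1/4$.

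You therefore need one more input. The paper obtains $\sqrt{\alpha}$ by applying Fuchs--van de Graaf to the single output in hand; compare the proof of Theorem~\ref{thm:main}, where the inequality is written without an average. That is legitimate if $\Vhat$ is deterministic: then both states are pure, $\tfrac12\|\Vhat\psi\Vhat^\dagger-\rho_U\|_1=\sqrt{1-F}\geq\sqrt{\alpha}$, and the positive spectral projector of the difference is the witness. Alternatively, keep your averaged witness and supply $\alpha\geq 1/4$, so that $2\alpha\geq\sqrt{\alpha}$. This holds for non-adaptive learners by \cref{sec:fid_defi}, where $\alpha\geq1-\negl(N)$, but would have to be argued separately in the adaptive case. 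Separately, your ``main obstacle'' is misplaced. Correctness on all of $\Ncw$, including the state-dependent $Q_*$, is a hypothesis of the corollary, not something the $K$-Lipschitz property supplies. In the paper the Lipschitz constant only controls concentration of the scalar $F(U)$, i.e.\ the passage from ensemble averages to typical $U$ that generates the $O(1/\poly(N))$ term. Getting simultaneous concentration over $\Ncw$ via L\'evy's lemma would require a union bound over a net of its unit ball. That net's cardinality is exponential in $\dim\Ncw$ and need not be beaten by the $\exp(-\varepsilon^2\dim\Hcode/12K^2)$ tail, so that plan should be dropped rather than pursued.
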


This is the operator-level consequence of Corollary~\ref{cor:learning}. It states that since the algorithm achieves perfect prediction on~$\Ncw$ but imperfect fidelity overall, the deficit must be concentrated on operators outside the causal wedge. The identification of~$Q_*$ with the regularized geodesic length~$\ell$ of the maximal interior slice is the subject of \cref{sec:distinguishing_operator}.

\noindent\textbf{Single realization versus disorder average.} An important conceptual distinction separates the $k$-design result (Theorem~\ref{thm:kdesign}) from the GPRU conclusion (Theorem~\ref{thm:gpru}). The $k$-design property is a statement about the disorder \emph{ensemble}: it controls the moments $\mathbb{E}_{\mathbf{J}}[f(U_{\mathbf{J}})]$ and handles non-adaptive adversaries who query $U^{\otimes m}$ on a fixed entangled state. The PRU adversary, by contrast, receives a \emph{single} $U=e^{-i\HSYK(\mathbf{J}_0)T}$ for one fixed coupling vector~$\mathbf{J}_0$ and may query it adaptively. A single SYK unitary has a fixed spectrum and conserved energy---structure absent from a Haar-random unitary. However, this structure is inaccessible to the adversary for three reasons. First, the energy eigenstates of $\HSYK(\mathbf{J}_0)$ are fully scrambled after $T\geq\Tscr$: by ETH, they are indistinguishable from random states in any few-body basis, so the conserved-energy structure is hidden behind the delocalization of the eigenbasis. Second, the eigenphases $\theta_a=E_a T\!\!\pmod{2\pi}$ that the adversary can extract via phase estimation are wrapped $O(\sigma T)\gg 1$ times around the unit circle, and a $\poly(N)$-sized subsample of wrapped eigenphases from the SYK spectrum is indistinguishable from a subsample of CUE eigenphases---this is the content of Conjecture~\ref{conj:planted_syk}. Third, the Meckes concentration inequality \cref{eq:concentration} ensures that the fidelity function $F(U)$ concentrates around its ensemble mean for a $1-\exp(-c\cdot 2^{N/2}/\poly(N))$ fraction of coupling vectors~$\mathbf{J}_0$, so the ensemble average faithfully represents the typical single realization. It is this combination of ETH (hiding the eigenbasis), eigenphase wrapping (hiding the spectral density), and measure concentration (bridging ensemble to single realization) that makes Conjecture~\ref{conj:planted_syk} a natural companion to the $k$-design result, upgrading statistical moment-matching to full computational indistinguishability.

\subsection{Fidelity deficit from the \texorpdfstring{$k$}{k}-design property}\label{sec:fid_defi}

We now show that the $k$-design result of Theorem~\ref{thm:kdesign} implies a near-maximal fidelity deficit for non-adaptive adversaries, without invoking Conjecture~\ref{conj:planted_syk}.

\begin{proposition}\label{prop:fidelity_nonadaptive}
Let $\Ens_T$ be an $\varepsilon$-approximate unitary $k$-design on $\Hil\cong\mathbb{C}^L$ with $k\geq 2m$. For any non-adaptive quantum algorithm that prepares an arbitrary $m$-register entangled input state $|\Psi\rangle\in\Hil^{\otimes m}\otimes\Hil_{\mathrm{anc}}$, queries $U^{\otimes m}$, and produces a prediction~$\Vhat$ for the action of $U$ on a random state $|\psi\rangle$,
\begin{equation}\label{eq:fidelity_nonadaptive}
  \underset{U\sim\Ens_T}{\avg}\;\underset{|\psi\rangle}{\avg}\; F\!\bigl(\Vhat|\psi\rangle,\,U|\psi\rangle\bigr) \;\leq\; \frac{L+m}{L^2+m} + \varepsilon'\,,
  \end{equation}
  where $\varepsilon'=O(\varepsilon\cdot m^2/L^2)$ is the design error contribution.
  \end{proposition}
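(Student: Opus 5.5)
The plan is to reduce the fidelity to an expectation of a polynomial in $U$ and $U^\dagger$ of degree at most $2m$ in each. A $k$-design with $k\geq 2m$ reproduces such polynomials up to the design error. The Haar computation then gives the main term, which is the optimal fidelity of $m\to$ ``$m+1$'' unitary cloning/learning.

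First, I would purify the algorithm. The output of the non-adaptive protocol is a classical description or a quantum program for $\Vhat$. In either case, the predicted state $\Vhat|\psi\rangle$ is obtained by applying a fixed channel $\Lambda$ to $(U^{\otimes m}\otimes\id)|\Psi\rangle$ together with $|\psi\rangle$. The averaged fidelity then takes the form
\begin{equation*}
\underset{U}{\avg}\,\underset{\psi}{\avg}\,\tr\!\bigl[(U\psi U^\dagger)\,\Lambda\bigl(U^{\otimes m}\Psi U^{\dagger\otimes m}\otimes\psi\bigr)\bigr],
\end{equation*}
which is a polynomial of degree $m+1$ in $U$ and $m+1$ in $U^\dagger$. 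Next, I would average over $|\psi\rangle$ using the symmetric-subspace projector. This turns the expression into a linear functional of the moment operator $\mathbb{E}_U[U^{\otimes(m+1)}\otimes \bar U^{\otimes(m+1)}]$. Since $m+1\leq 2m\leq k$, the design property of Theorem~\ref{thm:kdesign} applies. Writing $\varepsilon$-approximation in the frame-potential sense, I would convert it via $\|\Phi_{\Ens}^{(k)}-\Phi_{\mathrm{Haar}}^{(k)}\|_2^2=\Fk(\Ens)-k!$. This bounds the difference of the two moment operators contracted against the (norm $\leq 1$) test operator, giving the $\varepsilon'$ term. Tracking the normalization of the contraction should produce the $m^2/L^2$ factor, since only Weingarten terms with nontrivial permutation content differ at leading order.

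Second, I would evaluate the Haar value exactly. This is the optimal average fidelity for learning a Haar unitary from $m$ parallel uses, i.e.\ $m\to m+1$ unitary replication. I would compute it by Schur--Weyl duality. By covariance, the optimal $\Lambda$ can be taken $U(L)$-covariant, so the fidelity reduces to an overlap between the input state's decomposition over irreps of $U^{\otimes m}$ and the irreps of $U^{\otimes(m+1)}\otimes\bar U$. I would bound it by the fraction $\sum_\lambda d_\lambda^2/\ldots$, which for $m\ll L$ is dominated by the symmetric and antisymmetric pairing structure. This should give $(L+m)/(L^2+m)$. A sanity check is $m=0$, which gives $1/L$, the average fidelity of a random guess. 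It also matches the known estimation bound $(\sum_\lambda d_\lambda m_\lambda\text{-type})/L^2$ for small $m$.

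The main obstacle is this second step. Proving optimality of $(L+m)/(L^2+m)$ over all entangled inputs and ancillas requires the irrep-counting argument: the number of irreps of $U(L)$ appearing in $(\mathbb{C}^L)^{\otimes m}$, combined with the dimension factor per irrep. I would first try the simpler route of bounding the fidelity by the Haar expectation of $|\tr(\Vhat^\dagger U)|^2/L^2$-type quantities via Weingarten, using Corollary-style reduction to $\mathbb{E}|\tr(\cdot)|^2$. I would fall back to the full Schur--Weyl computation if that bound is not tight enough.
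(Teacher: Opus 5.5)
\textbf{Main gap: your second step targets a value that is not the Haar optimum.} If you carry out your covariant Schur--Weyl reduction exactly (input $\sum_\lambda c_\lambda|\Omega_\lambda\rangle$, covariant POVM, Pieri's rule), the optimal entanglement fidelity is $\lambda_{\max}(A)/L^2$. Here $A_{\lambda\mu}=\#\{\nu\vdash m+1:\ \nu\supset\lambda,\ \nu\supset\mu\}$ for $\lambda,\mu\vdash m$. For $L\geq m+1$, the positive vector $(f^\lambda)$ of standard-tableau counts is an eigenvector with eigenvalue $m+1$, because $\sum_{\nu=\lambda+\square}f^\nu=(m+1)f^\lambda$. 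By Perron--Frobenius this is the top eigenvalue, so $F_e^{\mathrm{opt}}=(m+1)/L^2$. Horodecki's formula then gives $F_{\mathrm{avg}}^{\mathrm{opt}}=(L+m+1)/(L(L+1))$, which exceeds the claimed bound by $m(m+1)/[L(L+1)(L^2+m)]$.

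A direct witness at $m=1$ uses the very input $|\Omega\rangle$ that the paper picks. Measure $(U\otimes\id)|\Omega\rangle$ in a Bell basis $\{(W_j\otimes\id)|\Omega\rangle\}$ and output $\Vhat=W_j$. Then
\[
F_e=\sum_j\mathbb{E}\,|\tr(W_j^\dagger U)|^4/L^4=2/L^2>2/(L^2+1),
\]
since $\mathbb{E}_{\mathrm{Haar}}|\tr U|^4=2$. So for an exact design, where $\varepsilon'=0$, the stated inequality is false and no proof can reach it. Your $m=0$ sanity check cannot detect this, because both expressions equal $1/L$ there. Likewise, the parallel-estimation optimum for $m<L$ is $(m+1)/L^2$, not $(m+1)/(L^2+m)$.

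The paper's proof lands on the smaller number by a different route. It asserts that $|\Omega\rangle^{\otimes m}$ is optimal, then applies the Hayashi/Keyl--Werner bound for a Haar-random pure state in $\mathbb{C}^{L^2}$ to the Choi state. But Choi states form the restricted family of maximally entangled states, and restricting the prior can only raise the attainable fidelity, so that bound is not an upper bound here. Your covariance route is the right tool. Aim it at the corrected value $(L+m+1)/(L(L+1))$, which is still $O(1/L)$ and is all that the subsequent estimate of $\alpha$ actually uses.

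\textbf{Secondary gap: your first step does not deliver the stated $\varepsilon'$ either.} The identity $\|\Phi^{(t)}_{\Ens}-\Phi^{(t)}_{\mathrm{Haar}}\|_2^2=\mathcal{F}^{(t)}-t!$ is fine. But Cauchy--Schwarz then bounds the deviation only by $\sqrt{\varepsilon\,t!}\,\|X\|_2$, with $t=m+1$ and $X$ the operator representing the fidelity functional. This has three problems. It scales as $\sqrt{\varepsilon}$ rather than $\varepsilon$. It carries a factor $\sqrt{t!}$. And it involves the Hilbert--Schmidt norm of $X$, which is not controlled by the operator-norm bound you invoke: on an $L^{2t}$-dimensional space, $\|X\|_\infty\leq 1$ still allows $\|X\|_2$ up to $L^{t}$. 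Your remark that only Weingarten terms with nontrivial permutation content differ does not produce a factor $m^2/L^2$. The paper also merely asserts $\varepsilon'=O(\varepsilon m^2/L^2)$. To make this step rigorous you need either a stronger design notion (diamond-norm or relative-error) or an explicit bound on $\|X\|_2$ for the optimal covariant strategy. Otherwise, state the error in the weaker form your argument actually yields.
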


  \begin{proof}
  For an exact $k$-design with $k\geq 2m$, the distribution of $U^{\otimes m}$ acting on any input state is indistinguishable from Haar up to the $2m$-th moment. We may therefore compute the optimal fidelity as if $U$ were Haar-random. The optimal non-adaptive strategy uses the maximally entangled state $|\Omega\rangle=(1/\sqrt{L})\sum_{i=1}^L|i\rangle|i\rangle$ as input: applying $U\otimes\id$ to each of $m$ copies produces $m$ copies of the Choi state $|\Omega_U\rangle=(U\otimes\id)|\Omega\rangle\in\mathbb{C}^L\otimes\mathbb{C}^L$, which encodes~$U$ completely. Estimating $U$ from these $m$ copies reduces to estimating the $L^2$-dimensional state $|\Omega_U\rangle$ from $|\Omega_U\rangle^{\otimes m}$. By the optimal state estimation bound of Hayashi~\cite{Hayashi:1998} and Keyl and Werner~\cite{Keyl:2001} (see also \cite{Massar:1995,Acin:2001,Bisio:2009}), the entanglement fidelity satisfies
  \begin{equation}\label{eq:choi_estimation}
    F_e \;=\; \underset{U}{\avg}\;\frac{|\tr(\Vhat^\dagger U)|^2}{L^2} \;\leq\; \frac{m+1}{L^2+m}\,.
    \end{equation}
    Converting to average state fidelity via the Horodecki formula $F_{\mathrm{avg}}=(L\cdot F_e+1)/(L+1)$~\cite{Horodecki:1999}:
    \begin{equation}\label{eq:fidelity_nonadaptive}
      \underset{U\sim\Ens_T}{\avg}\;\underset{|\psi\rangle}{\avg}\; F\!\bigl(\Vhat|\psi\rangle,\,U|\psi\rangle\bigr) \;\leq\; \frac{L+m}{L^2+m} + \varepsilon'\,,
      \end{equation}
      where $\varepsilon'=O(\varepsilon\cdot m^2/L^2)$ is the design error from replacing the $k$-design with exact Haar. For $m=\poly(N)$ and $L=2^{N/2}$, the right-hand side is $(2^{N/2}+\poly(N))/(2^N+\poly(N))=O(2^{-N/2})=\negl(N)$.
      \end{proof}

      \noindent For the SYK ensemble, we have
      \begin{equation}\label{eq:alpha_quantitative}
        \alpha \;\geq\; 1-\frac{L+m}{L^2+m}-\varepsilon' \;=\; 1-O(2^{-N/2}) \;=\; 1-\negl(N)\,.
        \end{equation}
        The fidelity deficit is therefore near-maximal: the non-adaptive adversary can predict the output state with fidelity at most $\negl(N)$ above the trivial random guess $1/L$. This is a stronger statement than the abstract $\alpha>0$ from the learning no-go theorem~\cite{Yang:2023ddi}, and it follows from the $k$-design property without requiring Conjecture~\ref{conj:planted_syk}. Therefore, the distinguishing gap \cref{eq:Q_star_gap} becomes $\sqrt{\alpha}\geq 1-\negl(N)$, confirming that the abstract gap is close to its maximum value.

\section{Cryptographic Censorship in JT gravity}
\label{sec:crypto_censorship}

We now apply the GPRU result of \cref{sec:gpru_theorem} to the JT/SYK holographic duality to derive the existence of event horizons. This section is organized as follows: \cref{sec:causal_wedge} constructs the efficient causal wedge reconstruction algorithm and verifies the technical conditions needed for the Cryptographic Censorship theorem; \cref{sec:main_theorem} states and proves the main theorem; and \cref{sec:distinguishing_operator} identifies the distinguishing operator and connects the result to the complexity equals volume conjecture.

\subsection{Causal wedge reconstruction in JT gravity}
\label{sec:causal_wedge}

The Cryptographic Censorship theorem (\cref{thm:cc_general}) requires the existence of an efficient, $K$-Lipschitz algorithm that reconstructs all operators in the causal wedge algebra. In this subsection we construct such an algorithm for JT gravity and verify that it satisfies the required properties.

\subsubsection*{The causal wedge and its algebra}

For the one-sided JT black hole, the causal wedge $\mathcal{W}_C=J^-(\mathscr{I})\cap J^+(\mathscr{I})$ of the asymptotic boundary $\mathscr{I}$ is the exterior region of the black hole. In two spacetime dimensions the structure is particularly transparent. The exterior is essentially a causal diamond whose domain of dependence is determined by a boundary time interval of thermal duration $\Delta\tau\sim O(\beta)$, with no topological obstructions of the kind discussed in~\cite{Engelhardt:2024hpe} for higher-dimensional spacetimes.

The causal wedge algebra $\Ncw\subset\BH$ consists of all bulk operators with support in $\mathcal{W}_C$. In the JT/SYK dictionary, this algebra has a concrete characterization. The boundary operators---single-trace correlators $O(\tau)$ and their polynomials---belong to $\Ncw$ trivially. Bulk scalar fields in the exterior are reconstructed from boundary data by the HKLL smearing formula~\cite{Hamilton:2005ju,Hamilton:2006fh},
\begin{equation}\label{eq:hkll}
  \phi(t,z) \;=\; \int d\tau\; K_\Delta(t,z\,|\,\tau)\;O(\tau)\,,
\end{equation}
where $K_\Delta$ is the bulk-to-boundary propagator on the black hole background and $\Delta$ is the conformal dimension of the dual operator~$O$. For AdS$_2$, the kernel $K_\Delta$ is known in closed form in terms of hypergeometric functions, it is peaked at $\tau\approx t$ with width $\sim\beta$, and it decays exponentially for $|\tau-t|\gg\beta$. The Schwarzian mode $f(\tau)$ determines the exterior metric and dilaton, so observables such as the ADM mass and exterior geodesic distances are also functions of boundary data and belong to~$\Ncw$. Operators with support behind the horizon---in particular the regularized geodesic length of the maximal interior slice, which we will identify as $Q_*$ in \cref{sec:distinguishing_operator}---do not belong to $\Ncw$ and require exponential reconstruction complexity~\cite{Brown:2019rox,Engelhardt:2021mue}.

\subsubsection*{The reconstruction algorithm}

We construct an efficient algorithm $\mathcal{A}_{\mathrm{CW}}$ that, given $\poly(N)$ oracle queries to $U=e^{-i\HSYK T}$, predicts all causal wedge observables. In the learning phase, the algorithm queries~$U$ on $M=\poly(N)$ training states from~$\Hcode$ and measures the $n_O=O(N^q)$ few-body correlators $\langle\phi_\ell|U^\dagger O_j U|\phi_\ell\rangle$ to $1/\poly(N)$ precision, converting them to exterior bulk observables via the HKLL convolution \cref{eq:hkll}. In the prediction phase, the classical shadow tomography protocol of~\cite{Huang:2022cmu} constructs a prediction function $h(O,\rho)\approx\tr(O\,U\rho U^\dagger)$ for all bounded-degree observables~$O$, with $\poly(N)$ sample complexity guaranteed by the quantum Bohnenblust--Hille inequality.

The algorithm correctly predicts all $O\in\Ncw$ because the HKLL formula expresses each causal wedge operator as a smeared integral of Heisenberg-evolved few-body operators $O(\tau)=e^{i\HSYK\tau}O(0)\,e^{-i\HSYK\tau}$, whose expectation value $\tr(O(\tau)\,\rho)=\tr(O(0)\,e^{-i\HSYK\tau}\rho\,e^{i\HSYK\tau})$ is determined by the few-body reduced state---precisely what shadow tomography learns. More precisely, $\langle O\rangle$ for any $O\in\Ncw$ is a linear combination of $r$-point boundary correlators with $r=O(1)$, each involving few-body operators. By contrast, the interior length~$\ell$ of \cref{sec:distinguishing_operator} requires $O(N)$-point correlators, inaccessible from $\poly(N)$ samples.

The learning no-go theorem of~\cite{Yang:2023ddi} applies to any algorithm whose few-body predictions match those of~$U$, regardless of whether it outputs a unitary~$\Vhat$ or a classical prediction function~\cite{Engelhardt:2024hpe}: if few-body predictions are correct but the full fidelity falls short of unity, a distinguishing operator outside~$\Ncw$ must exist. The total complexity is $\poly(N)$ oracle queries and $\poly(N)$ computation.

\subsubsection*{Lipschitz property and norm preservation}

The fidelity function $F(U)=F(\Vhat|\psi\rangle,U|\psi\rangle)$ must be Lipschitz in~$U$ for the measure concentration argument of \cref{thm:cc_general}. For any algorithm making $m$ oracle queries, a standard hybrid argument~\cite{BennettBBV97,Ambainis:2002} gives $\|\rho_{\mathrm{out}}^{(1)}-\rho_{\mathrm{out}}^{(2)}\|_1\leq 2m\,\|U_1-U_2\|_\infty$, regardless of adaptivity or entanglement. Since $F(U)$ depends on~$U$ both explicitly and through~$\Vhat$, this yields
\begin{equation}\label{eq:fidelity_lipschitz}
  |F(U_1)-F(U_2)| \;\leq\; 2(2m+1)\,\|U_1-U_2\|_\infty\,,
\end{equation}
so $F$ is Lipschitz with $K=O(m)=\poly(N)$. The Meckes concentration inequality~\cite{Meckes:2019} then gives
\begin{equation}\label{eq:concentration}
  \Pr\bigl[|F(U)-\langle F\rangle|\geq\varepsilon\bigr] \;\leq\; 2\exp\!\Bigl(-\frac{\varepsilon^2\,\dim\Hcode}{12\,K^2}\Bigr)\,,
\end{equation}
with exponent $\varepsilon^2\cdot 2^{N/2}/\poly(N)^2$, exponentially large for any fixed~$\varepsilon>0$. Norm preservation follows from constructing~$\Vhat$ as a unitary circuit of depth~$\poly(N)$ consistent with the predicted correlators; the conclusions of \cref{sec:main_theorem} depend only on the gap between correct few-body predictions and imperfect full-state prediction, not on the specific construction of~$\Vhat$.

\subsection{Event horizons from pseudorandom dynamics}
\label{sec:main_theorem}

Now, we state and prove the main result of the paper.

\begin{theorem}[Cryptographic Censorship in JT/SYK]\label{thm:main}
Assume random matrix universality~(A1), Haar eigenvector statistics~(Assumption~\ref{asm:A2}), and planted-SYK hardness~(Conjecture~\ref{conj:planted_syk}). For any $T=\poly(N)$ satisfying $T\geq\Tscr(N)$, the JT gravity dual of a typical state in the SYK microcanonical window $\Hcode$ contains an event horizon, and the fraction of states whose duals are horizonless is at most $\exp(-c\cdot 2^{N/2}/\poly(N))$.
\end{theorem}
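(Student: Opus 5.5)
The plan is to verify, one by one, the hypotheses of the general Cryptographic Censorship theorem (\cref{thm:cc_general}) in the JT/SYK setting, and then sharpen its conclusion into the quantitative bound on the horizonless fraction using the concentration estimate \cref{eq:concentration}.

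\textbf{Step 1: structural hypotheses.} The structural hypotheses have already been assembled in \cref{sec:jt_review}, and the first step is simply to collect them. These are:
\begin{itemize}[leftmargin=2em, itemsep=2pt]
\item the approximate direct-sum decomposition $\Hil=\bigoplus_g\Hil_g$, indexed by the horizon dilaton value $\phi_h$;
\item the asymptotically isometric code $V_N$ into $\Hcode$;
\item the intertwining relation $V_N^\dagger e^{-iH_N t}V_N=e^{-iH_{\mathrm{bulk}}t}$ supplied by the Schwarzian.
\end{itemize}
Their errors are $O(1/N)+O(e^{-\Sdim})$. I would note that these errors can be absorbed into the $O(1/\poly(\kappa))$ slack of \cref{eq:distinguishing_gap}.

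\textbf{Step 2: dynamical hypotheses.} Next, invoke Theorem~\ref{thm:gpru}: under A1, A2 and Conjecture~\ref{conj:planted_syk}, for $T=\poly(N)$ with $T\geq\Tscr$, the ensemble $\Ens_T$ is a GPRU with $\kappa=N/2$. Its orbit $\{U_{\mathbf{J}}|\psi_0\rangle\}$ is then a PRS for every $|\psi_0\rangle\in\Hcode$. Corollaries~\ref{cor:learning} and~\ref{cor:distinguishing} then supply the fidelity deficit $\alpha$ and an operator $Q_*$ satisfying \cref{eq:Q_star_gap}. For the reconstruction hypothesis, use the algorithm $\mathcal{A}_{\mathrm{CW}}$ of \cref{sec:causal_wedge}. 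It combines HKLL smearing with shadow tomography, so it runs in $\poly(N)=\poly(\kappa)$ time and correctly predicts every element of $\Ncw$. By \cref{eq:fidelity_lipschitz} it is $K$-Lipschitz with $K=O(m)$.

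\textbf{Step 3: the contradiction.} Argue by contradiction, as in the proof of \cref{thm:cc_general}. Suppose the dual of $U_{\mathbf{J}}|\psi\rangle$ were horizonless. Then the causal wedge of $\mathscr{I}$ would be the whole bulk, so $\Ncw=\BH$. In that case $\mathcal{A}_{\mathrm{CW}}$ would predict every bounded observable, including $Q_*$, to accuracy $1/\poly(N)$. This contradicts \cref{eq:Q_star_gap} once $\sqrt{\alpha}>O(1/\poly(N))$, which holds for large $N$ because $\alpha$ is $N$-independent.

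\textbf{Step 4: quantitative typicality.} The contradiction in Step 3 only holds for $U$ and $|\psi\rangle$ on which the fidelity is near its average. To quantify this, apply \cref{eq:concentration} with $\varepsilon=\alpha/2$ and $K=\poly(N)$. Any state whose dual is horizonless must have $F(U)\geq 1-O(1/\poly(N))$, which deviates from the mean $\langle F\rangle\leq 1-\alpha$ by at least $\alpha/2$. Such states therefore lie in the tail event, of measure at most $2\exp(-\alpha^2\dim\Hcode/(48K^2))$. Since $\dim\Hcode\sim\rho(E_0)\Delta E\sim 2^{N/2}/\poly(N)$, this gives the stated bound $\exp(-c\,2^{N/2}/\poly(N))$.

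\textbf{Main obstacle.} I expect the main obstacle to be the subtle step of justifying concentration for the SYK ensemble. Meckes' inequality is proved for the Haar measure on $U(\dim\Hcode)$, whereas we need it for the disorder ensemble. Here I would first try to transfer the Haar tail bound to $\Ens_T$ using indistinguishability: the event ``$F(U)$ is atypically large'' is testable by an efficient procedure built from $\mathcal{A}_{\mathrm{CW}}$ plus a swap test against $U|\psi\rangle$. So, by Conjecture~\ref{conj:planted_syk}, its probability under $\Ens_T$ differs from its Haar value by at most $\negl(N)$.

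This transfer only yields a negligible bound, not a doubly exponential one. To recover the stated rate, I would fall back on concentration over states $|\psi\rangle$ in $\Hcode$ at fixed $U$ (L\'evy's lemma on the sphere of dimension $\dim\Hcode$), where the Lipschitz constant in $\psi$ is again $\poly(N)$. This states the fraction claim over states in the microcanonical window, which is what the theorem asserts.
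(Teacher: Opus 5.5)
Your Steps 1--3 follow the paper's proof. You use Theorem~\ref{thm:gpru} for the PRS, the algorithm $\mathcal{A}_{\mathrm{CW}}$, and the learning no-go plus Fuchs--van de Graaf to produce $Q_*\notin\Ncw$, hence $\Ncw\subsetneq\BH$. The paper phrases this last step positively, via the absence of a Hawking--Page transition in JT, but it is the same contradiction. You also make explicit a link the paper leaves implicit: a horizonless dual forces $F\geq 1-O(1/\poly(N))$ and so lands in the tail event. That link is what actually turns concentration into a bound on the horizonless fraction.

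The real divergence is the measure in Step~4. The paper applies Meckes ``on $U(L)$'' and asserts concentration for a $1-\exp(-c\,2^{N/2}/\poly(N))$ fraction of \emph{coupling vectors}, silently treating $\Ens_T$ as Haar. Your obstacle paragraph correctly identifies why this is unjustified. Your fallback to L\'evy's lemma on the unit sphere of $\Hcode$ is not merely convenient but necessary. The disorder has only $\binom{N}{q}=\poly(N)$ Gaussian degrees of freedom, and with the discretized keys of Appendix~\ref{app:trotter} a single typical key already has probability $2^{-\poly(N)}\gg\exp(-c\,2^{N/2}/\poly(N))$. So no concentration argument in $\mathbf{J}$-space can produce that rate; it must come from a $2^{\Omega(N)}$-dimensional space of states, which also matches the theorem's wording. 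Your first version of Step~4 conflates the two measures: the Meckes tail event is a set of unitaries, not of states.

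What needs repair is that the two halves of your obstacle paragraph must be composed, not used as alternatives. At fixed $\mathbf{J}$, L\'evy's lemma concentrates $F(U_{\mathbf{J}},\cdot)$ around its own state average $\bar F(\mathbf{J})$. The no-go bound $\langle F\rangle\leq 1-\alpha$, however, holds only on average over $\mathbf{J}$, and by itself gives $\bar F(\mathbf{J})\leq 1-\alpha/2$ only with probability at least $\alpha/2$. So first use the indistinguishability transfer to get $\bar F(\mathbf{J})\leq 1-\alpha/2$ for all but a $\negl(N)$ fraction of $\mathbf{J}$. This works because on the Haar side Meckes genuinely applies and the learner's fidelity is exponentially small, and $\bar F$ can be estimated efficiently by sampled swap tests. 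Only then apply L\'evy with $\varepsilon=\alpha/4$.

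The relevant Lipschitz constant there is $O(1)$ in $\psi$, provided $\Vhat$ is learned independently of the test state. \cref{eq:fidelity_lipschitz} is a bound in $U$ and is not what you need. The honest conclusion is two-tiered: for all but negligibly many couplings, the horizonless fraction of states in $\Hcode$ is at most $\exp(-c\,2^{N/2}/\poly(N))$. That is what the argument can support; the paper's doubly exponential claim over couplings is not.

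Separately, your Step~1 claim that the code errors fit inside the $1/\poly$ slack is too quick. \cref{sec:jt_review} gives the $O(1/N)$ intertwining error only for $t=O(N^0)$, and $O(1/N)$ eigenvalue shifts accumulate to phase errors of order $T/N$ at general $T=\poly(N)$. The paper's proof does not address this either.
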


\begin{proof}
Let $|\psi\rangle\in\Hcode$ be a microcanonical state at energy $E_0$ in the bulk of the SYK spectrum, and let $U=e^{-i\HSYK(\mathbf{J})T}$ for a coupling vector $\mathbf{J}$ drawn from the disorder distribution. By Theorem~\ref{thm:gpru}, the ensemble $\Ens_T=\{U_{\mathbf{J}}\}_{\mathbf{J}}$ is a GPRU with security parameter $\kappa=N/2$, so the time-evolved state $U|\psi\rangle$ is typical in the associated PRS. Applying the Meckes concentration inequality \cref{eq:concentration} on $U(L)$ with Lipschitz constant $K=\poly(N)$ from \cref{eq:fidelity_lipschitz} and code subspace dimension $\dim\Hcode\geq 2^{N/2}/\poly(N)$, the fidelity function $F(U)=F(\Vhat|\psi\rangle,U|\psi\rangle)$ of any $\poly(N)$-query algorithm concentrates around its mean for a $1-\exp(-c\cdot 2^{N/2}/\poly(N))$ fraction of coupling vectors~$\mathbf{J}$, establishing typicality.

The causal wedge algorithm $\mathcal{A}_{\mathrm{CW}}$ of the preceding subsection learns all few-body boundary correlators from $\poly(N)$ oracle queries and converts them to exterior bulk observables through the HKLL convolution. By the learning no-go theorem~\cite{Yang:2023ddi}, any such algorithm satisfies $\avg F(\Vhat|\psi\rangle,U|\psi\rangle)\leq 1-\alpha$ with $\alpha>0$ independent of~$N$; for non-adaptive adversaries, the $k$-design property sharpens this to $\alpha\geq 1-\negl(N)$ (\cref{sec:fid_defi}). Since the algorithm correctly predicts all operators in the causal wedge algebra~$\Ncw$ but achieves fidelity strictly less than unity, the Fuchs--van de Graaf inequality guarantees the existence of an observable $Q_*$ with $\|Q_*\|_\infty\leq 1$ and $|\tr(Q_*\Vhat\psi\Vhat^\dagger)-\tr(Q_* U\psi U^\dagger)|\geq\sqrt{\alpha}-O(1/\poly(N))$. This operator cannot lie in~$\Ncw$, for the algorithm would otherwise predict it correctly.

The bulk-to-boundary code map $V_N$ of \cref{sec:jt_gravity} sends the time-evolved state $U|\psi\rangle$ to a JT gravity state on an asymptotically AdS$_2$ spacetime $(M,g)$. Since $H_{\mathrm{SYK}}$ preserves the microcanonical window, $U|\psi\rangle$ remains in~$\Hcode$ and its dual is a well-defined geometric state. In JT gravity, every finite-energy state above the ground state corresponds to a black hole geometry---there is no Hawking--Page transition~\cite{Maldacena:2016hyu,Hawking:1982dh,Witten:1998zw}---so the only question is whether the causal wedge covers the entire spacetime or is bounded by an event horizon. The existence of $Q_*\notin\Ncw$ resolves this: the causal wedge algebra is a proper subalgebra of the full operator algebra, $\Ncw\subsetneq\mathcal{B}(\Hil)$, and its boundary in the bulk spacetime is, by the causal structure of asymptotically AdS$_2$ geometries, an event horizon.
\end{proof}

\subsection{The distinguishing operator and complexity growth}
\label{sec:distinguishing_operator}

Corollary~\ref{cor:distinguishing} guarantees the existence of a bulk operator $Q_*\notin\Ncw$ satisfying the gap \cref{eq:Q_star_gap}, but does not identify it. We now provide the explicit identification $Q_*=\ell$, where $\ell$ is the regularized geodesic length of the maximal interior slice, and compute the gap directly.

For a one-sided JT black hole, $\ell(t)$ is the regularized proper length of the maximal volume spacelike slice anchored at the boundary at time~$t$---a self-adjoint operator on the bulk Hilbert space whose spectrum in the Schwarzian quantum mechanics is continuous and unbounded~\cite{Maldacena:2016upp,Harlow:2018tqv}, but whose expectation values on~$\Hcode$ at $T=\poly(N)$ are bounded by $(2\pi/\beta)T$. Its boundary dual has been constructed through several approaches. In~\cite{Maldacena:2016hyu,Iliesiu:2021ari}, it was written as a function of the Schwarzian reparametrization mode. Another interpretation was made in~\cite{Lin_2022} as the chord number operator in double-scaled SYK. As a side note, in the one-sided case, $\ell$ is also conjectured to be related to the Krylov spread complexity~\cite{Rabinovici_2022,Parker:2019yvk}, though the precise dictionary remains unclear. Let us explain why~$\ell$ underpins the Cryptographic Censorship argument. The maximal interior slice lies behind the event horizon, so $\ell\notin\Ncw$. Its reconstruction from boundary data requires complexity $\sim e^{cN}$ by the Python's Lunch conjecture~\cite{Brown:2019rox}. Most importantly, the linear growth of $\langle\ell(T)\rangle$ is encoded in $O(N)$-body correlations inaccessible to the shadow tomography protocol of~\cite{Huang:2022cmu} from $\poly(N)$ samples---the precise sense in which $\ell$ lies beyond the reach of~$\mathcal{A}_{\mathrm{CW}}$.

The expectation value of~$\ell$ in the time-evolved state $|\psi(T)\rangle=e^{-i\HSYK T}|\psi\rangle$ for a typical microcanonical state is~\cite{Stanford:2014jda,Iliesiu:2021ari,Lin_2022}
\begin{equation}\label{eq:ell_true}
  \langle\ell\rangle_{\mathrm{true}} \;=\; \ell_0 + \frac{2\pi}{\beta}\,T + O(1/\sqrt{N})\,,
\end{equation}
where $\ell_0$ is the initial regularized length and $2\pi/\beta$ is the Lyapunov exponent saturating the chaos bound~\cite{Maldacena:2015waa}. The causal wedge algorithm, seeing only thermalized few-body correlators, predicts the equilibrium value $\langle\ell\rangle_{\mathrm{predicted}}=\ell_0+O(1)$ with no linear growth. The gap is
\begin{equation}\label{eq:ell_gap}
  \Delta\ell \;=\; \frac{2\pi}{\beta}\,T - O(1) \;\geq\; O(1) \qquad\text{for } T\geq\beta\,,
\end{equation}
which at $T=\Tscr\sim(\beta/2\pi)\log N$ evaluates to $\log N-O(1)$, exceeding the threshold $\sqrt{\alpha}$ of \cref{eq:Q_star_gap} by a margin growing with~$N$. Also, since $\ell$ is bounded on the code subspace with $\|\ell\|_{\Hcode}\leq\ell_0+(2\pi/\beta)T+O(1)$, the normalized operator $\tilde{Q}_*=\ell/\|\ell\|_{\Hcode}$ satisfies $\|\tilde{Q}_*\|_\infty\leq 1$ and achieves gap $\Delta\tilde{Q}_*=1-O(\beta\ell_0/T)\geq\sqrt{\alpha}$ for $T\geq\beta$, confirming that~$\ell$ (after normalization) is a valid choice of~$Q_*$ in Corollary~\ref{cor:distinguishing}. The $O(1)$ here correction is conservative, though. In Appendix~\ref{app:prediction}, we argue that by employing the degree-decomposition analysis, we can improve it to $O(1/N)$. 

Finally, through the complexity equals volume conjecture~\cite{Stanford:2014jda,susskind2014addendumcomputationalcomplexityblack,Susskind:2018pmk}, this gap acquires a complexity-theoretic interpretation.  In JT gravity the ``volume'' of the maximal slice is~$\ell$, so $\mathcal{C}(|\psi(T)\rangle)=\ell(T)/\GN=N\cdot\ell(T)$ and the complexity gap is
\begin{equation}\label{eq:complexity_gap}
  \Delta\mathcal{C} \;=\; N\cdot\frac{2\pi}{\beta}\,T\,,
\end{equation}
growing linearly with both $N$ and $T$. This expression indeed gives another viewpoint regarding the Cryptographic Censorship framework. From Theorem~\ref{thm:main}, $\ell$ is the operator escaping efficient prediction, with the horizon enforcing this computational limitation. Additionally, the CV conjecture tells us that the $\ell/\GN$ is the state complexity whose linear growth $d\mathcal{C}/dT=2\pi N/\beta$ the horizon protects. Finally, the horizon forms at $T\geq\Tscr$ when operator size saturates \cite{Roberts:2018mnp,Qi:2018bje}, after which the complexity encoded in~$\ell$ continues to grow in a regime inaccessible to few-body probes, so in light of this, one can indeed see the complexity gap through the lens of scrambling.

\section{Discussion}
\label{sec:discussion}

We have shown that the SYK disorder ensemble is a GPRU (Theorem~\ref{thm:gpru}) and used this to establish the first conditional realization of Cryptographic Censorship in a controlled holographic duality (Theorem~\ref{thm:main}), with the interior length~$\ell$ as the explicit distinguishing operator connecting horizon formation to complexity growth. We now discuss implications and open directions. 

\medskip
\noindent\textbf{The planted-SYK conjecture.} Conjecture~\ref{conj:planted_syk} is the sole computational assumption. It asserts hardness in the oracle model and is independent of standard cryptographic assumptions. As a matter of fact, the abstract PRU constructions of~\cite{MaHuang:2024,Metger:2024} build PRUs from one-way functions but do not address physical Hamiltonian dynamics, while the planted-SYK conjecture could hold even without one-way functions. On the other hand, the LDP framework is classical and does not directly bound quantum oracle algorithms. It would be illuminating to develop a rigorous quantum generalization and subsequently understand its implications. 

\medskip
\noindent\textbf{Higher dimensions.} It would be great to construct a concrete higher-dimensional example of Cryptographic Censorship. However, a few obstructions arise. Higher-dimensional CFTs have a fixed Hamiltonian rather than a disorder ensemble, requiring a different source of pseudorandomness (perhaps through the initial-state ensemble or a late-time average), and the spectral form factor lacks the exact solvability of the JT matrix integral. Additionally, the causal wedge reconstruction in higher dimensions faces the subtlety that the entanglement wedge generically exceeds the causal wedge~\cite{Czech:2012bh,Dong:2016eik}, and the Cryptographic Censorship theorem applies to the latter.

\medskip
\noindent\textbf{De Sitter space.} The Cryptographic Censorship framework extends to cosmological horizons~\cite{Engelhardt:2024hpe}. In de Sitter space, the static patch is the causal wedge, and the region beyond the cosmological horizon is hidden. If the boundary dual of de Sitter gravity generates pseudorandom dynamics, these methods would predict the cosmological horizon from boundary data alone. The relevant notion would be a pseudorandom \emph{isometry}~\cite{cryptoeprint:2023/1741} rather than a pseudorandom unitary, since the de Sitter S-matrix maps a smaller ``in'' Hilbert space to a larger ``out'' Hilbert space.

\appendix
\makeatletter
\renewcommand{\theequation}{\textcolor{jhepeqn}{\Alph{section}\arabic{equation}}}
\@ifundefined{theHequation}{}{\renewcommand{\theHequation}{appendix.\Alph{section}.\arabic{equation}}}
\makeatother
\section{Weingarten calculus for the frame potential}
\label{app:weingarten}

This appendix provides the technical details underlying the frame potential results of \cref{sec:frame_potential}. We derive the master formula \cref{eq:Fk_weingarten}, and prove the general bound of Theorem~\ref{thm:kdesign}.

\subsubsection*{The Weingarten formula for unitary moments}

Let $V\in U(L)$ be a Haar-distributed unitary. The fundamental result of the Weingarten calculus~\cite{Collins:2006,Collins:2022} expresses the $(t,t)$-moment of the matrix entries of~$V$ as a sum over pairs of permutations in the symmetric group~$S_t$:
\begin{equation}\label{eq:haar_moment}
  \int_{U(L)}\! dV\;\prod_{m=1}^{t} V_{i_m j_m}\,\overline{V}_{i'_m j'_m} \;=\; \sum_{\sigma,\tau\in S_t} \delta_{i,\sigma(i')}\,\delta_{j,\tau(j')}\;\Wg(\sigma^{-1}\tau,\,L)\,,
\end{equation}
where $\delta_{i,\sigma(i')}=\prod_{m=1}^t\delta_{i_m,i'_{\sigma(m)}}$ and the Weingarten function $\Wg(\pi,L)$ depends only on the cycle type of~$\pi$. For the identity permutation $\Wg(\mathrm{id},L)=1/\bigl(\prod_{j=0}^{t-1}(L^2-j^2)/(L)\bigr)$, and the leading behavior for large~$L$ is $\Wg(\pi,L)=L^{-t-|\pi|}(\mathrm{Moeb}(\pi)+O(L^{-2}))$, where $|\pi|=t-c(\pi)$ is the distance to the identity (with $c(\pi)$ the number of cycles) and $\mathrm{Moeb}(\pi)$ is a combinatorial factor given by a product of Catalan numbers over the cycles of~$\pi$.

\subsubsection*{Derivation of the master formula}

Starting from the frame potential $\Fk(T)=\langle|\tr(U_1^\dagger U_2)|^{2k}\rangle$ with $U_i=V_i\,\mathrm{diag}(e^{-iE_a^{(i)}T})\,V_i^\dagger$ and $V=V_1^\dagger V_2$, the trace decomposes as in \cref{eq:trace_decomp}. Expanding $|\tr(U_1^\dagger U_2)|^{2k}=[\tr(U_1^\dagger U_2)]^k\,[\tr(U_2^\dagger U_1)]^k$ produces $2k$ copies of the matrix elements~$V_{ab}$, with the first~$k$ copies carrying the sign $s_m=+1$ (from the holomorphic factor) and the last~$k$ copies carrying $s_m=-1$ (from the antiholomorphic factor). Applying \cref{eq:haar_moment} with $t=2k$ to the Haar average over~$V$ \footnote{The Haar integral is over $U(L)$, corresponding to Dyson index $\beta=2$. For SYK at $N\bmod 8=0$ ($\beta=1$) or $N\bmod 8=4$ ($\beta=4$), the appropriate integral is over $O(L)$ or $Sp(L/2)$, with modified Weingarten functions~\cite{Collins:2006}. The paired/unpaired structure and the scaling of the bound are unchanged; only the constant~$C$ in \cref{eq:kdesign_bound} depends on~$\beta$.} yields the master formula \cref{eq:Fk_weingarten}, where the spectral factor associated to a permutation $\sigma\in S_{2k}$ is
\begin{equation}\label{eq:cycle_power_def}
  C_\sigma^{(i)} \;=\; \prod_{\text{cycles }c\text{ of }\sigma} p_{n_c}^{(i)}(T)\,, \qquad n_c \;=\; \sum_{m\in c} s_m\,,
\end{equation}
with $p_n^{(i)}(T)=\tr(e^{-in\HSYK(\mathbf{J}_i)T})$ the spectral power sums defined in \cref{eq:spectral_power} and $s_m$ the sign pattern. The cycle power $n_c$ counts the excess of $+1$ over $-1$ indices in the cycle~$c$, and it determines whether the spectral factor reduces to a power of~$L$ ($n_c=0$, since $p_0=\tr(\id)=L$) or not ($n_c\neq 0$).

\subsubsection*{Proof of the general bound}

We now prove Theorem~\ref{thm:kdesign}. The master formula \cref{eq:Fk_weingarten} expresses $\Fk(T)$ as a double sum over $(\sigma,\tau)\in S_{2k}\times S_{2k}$. By \cref{eq:paired_sum}, the restriction to pairs where both $\sigma$ and $\tau$ are paired gives exactly~$k!$, so the deviation is
\begin{equation}\label{eq:deviation_decomp}
  \Fk(T)-k! \;=\; \sum_{\substack{(\sigma,\tau)\in S_{2k}^2 \\ \sigma\text{ or }\tau\text{ unpaired}}} \Wg(\sigma^{-1}\tau,L)\;\bigl\langle C_\sigma^{(1)}\bigr\rangle_{\!\mathbf{J}_1}\;\bigl\langle C_\tau^{(2)}\bigr\rangle_{\!\mathbf{J}_2}^*\,.
\end{equation}
We must show that every term on the right-hand side is small in the ramp regime $T\gtrsim\Tdip$.

In this regime, the disconnected spectral form factor has decayed ($\SFF_{\mathrm{disc}}\approx L^2 e^{-\sigma^2 T^2}\ll 1$), so $\langle p_n\rangle_{\mathbf{J}}\approx 0$ for all $n\neq 0$. An unpaired permutation's spectral factor $\langle C_\sigma\rangle_{\mathbf{J}}$ therefore vanishes at leading order unless it can be re-expressed through connected correlators that pair up the nonzero cycle powers. The leading such correlator is $\langle|p_1|^2\rangle_{\mathrm{conn}}=\SFF_{\mathrm{conn}}(T)\approx L\tau$, where $\tau=T/\THeisen$.

Among all unpaired permutations, those contributing at leading order in $1/L$ are the ones with the maximum number of cycles $c(\sigma)=k+1$. The reason is that the spectral factor scales as $L^{c(\sigma)-2}\cdot\SFF$ for a permutation with one connected pair of unpaired cycles (the paired cycles contributing $L$ each), while the leading Weingarten weight $\Wg(\mathrm{id},L)\sim L^{-2k}$ gives a net power $L^{2(c(\sigma)-2)-2k+2}$, which vanishes only when $c(\sigma)=k+1$. For any $c(\sigma)<k+1$, the contribution is suppressed by at least $L^{-2}$.

The maximum $c(\sigma)=k+1$ is achieved uniquely by two singleton unpaired cycles of powers $\pm 1$ and $k-1$ paired two-cycles. There are
\begin{equation}\label{eq:unpaired_count}
  \#\bigl\{\text{leading minimally unpaired}\bigr\} \;=\; k\cdot k\cdot(k-1)! \;=\; k\cdot k!\,,
\end{equation}
counted as $k$ choices for the $+1$ singleton, $k$ for the $-1$ singleton, and $(k-1)!$ perfect matchings of the remaining indices. We verified this by exhaustive enumeration for $k=1,2,3,4$. Permutations with longer unpaired cycles (e.g., a three-cycle with $|n_c|=1$) also exist but have $c(\sigma)\leq k$ and are suppressed by $L^{-2}$, placing them among the subleading terms bounded below.

For each leading minimally unpaired $\sigma$ with $\sigma=\tau$ (the dominant Weingarten term), the spectral factor is $\langle C_\sigma\rangle_{\mathbf{J}}=L^{k-1}\cdot\SFF(T)$, giving
\begin{equation}
  \Wg(\mathrm{id},L)\cdot\bigl|\bigl\langle C_\sigma\bigr\rangle\bigr|^2 \;=\; L^{-2k}\cdot L^{2(k-1)}\cdot\SFF^2 \;=\; \frac{\SFF^2}{L^2} \;=\; \tau^2\,.
\end{equation}
Summing over all $k\cdot k!$ leading terms yields the dominant correction $|\Fk-k!|_{\mathrm{leading}}\leq k\cdot k!\cdot\tau^2$. It should be noted that there are contributions that can be attributed to the off-diagonal, multiple-pair, and nonmaximal/higher-power classes. However, it can be easily shown that all of these corrections are negligible. 

Finally, the eigenvector correction $\delta_{\mathrm{ETH}}$ arises from replacing the true eigenvector average with the Haar integral. The paired contractions use only the unitarity constraint $VV^\dagger=I$ and carry no correction. Each of the two unpaired contractions deviates from Haar by $O(\varepsilon_2)$ under (A2a), and the cross-correlations between paired and unpaired contractions are controlled by (A2b) at $O(\varepsilon_4)$ per cross-term. With $O(k)$ cross-terms per $(\sigma,\tau)$ pair, the total normalized correction is $\delta_{\mathrm{ETH}}\leq O(k\cdot\varepsilon_2+k^2\cdot\varepsilon_4)=O(k^2\cdot e^{-cN})$, which is $\negl(N)$ for $k=\poly(N)$.\hfill$\square$

\section{Efficient Hamiltonian simulation of SYK}
\label{app:trotter}

This appendix establishes the efficient generation requirement of Theorem~\ref{thm:gpru} for polynomial evolution times $T=\poly(N)$. The first-order Trotter--Suzuki formula approximates the time evolution as \cite{Whitfield_2011,Berry:2006ys,Suzuki:2002zuq,Suzuki:1991jtk}

\begin{equation}\label{eq:trotter}
  e^{-i\HSYK T} \;=\; \Bigl(\prod_{I=1}^{M} e^{-iH_I T/r}\Bigr)^{\!r} + \mathcal{E}(r)\,,
\end{equation}
with operator-norm error bounded by
\begin{equation}\label{eq:trotter_error}
  \|\mathcal{E}(r)\|_\infty \;\leq\; \frac{T^2}{2r}\sum_{I<I'}\|[H_I,H_{I'}]\|_\infty\,.
\end{equation}
Each commutator $[H_I,H_{I'}]$ vanishes when the index sets $I$ and $I'$ are disjoint (since products of disjoint even-body Majorana operators commute), and satisfies $\|[H_I,H_{I'}]\|_\infty\leq 2|J_I||J_{I'}|$ otherwise. The number of overlapping pairs is $O(M\cdot N^{q-1})=O(N^{2q-1})$, and the typical coupling magnitude is $|J_I|\sim\sqrt{(q-1)!/N^{q-1}}=O(N^{-(q-1)/2})$, giving
\begin{equation}\label{eq:commutator_sum}
  \sum_{I<I'}\|[H_I,H_{I'}]\|_\infty \;=\; O(N^q)\,.
\end{equation}
Achieving precision~$\varepsilon$ requires $r=O(T^2 N^q/\varepsilon)$ Trotter steps, each consisting of $M=O(N^q)$ gates, for a total gate count
\begin{equation}\label{eq:gate_count}
  G \;=\; O\!\Bigl(\frac{T^2\,N^{2q}}{\varepsilon}\Bigr)\,,
\end{equation}
which is $\poly(N)$ for $T=\poly(N)$ and $\varepsilon=1/\poly(N)$.

For the PRU definition (\cref{def:pru}), the key $k\in\mathcal{K}$ encodes the $M$ couplings discretized to $B=\poly(N)$ bits, yielding a key space $\mathcal{K}=\{0,1\}^{B\cdot M}$ of polynomial size. The discretization error is
\begin{equation}\label{eq:discretization}
  \|U_{\mathbf{J}}-U_{\tilde{\mathbf{J}}}\|_\infty \;\leq\; T\cdot M\cdot 2^{-B} \;=\; O(T\,N^q\cdot 2^{-B})\,,
\end{equation}
which is $\negl(N)$ for $B=\poly(N)$ and $T=\poly(N)$. The combined Trotter and discretization error is therefore negligible and is absorbed into the security guarantee of Theorem~\ref{thm:gpru}.

\section{Prediction error for the interior length}\label{app:prediction}

Any boundary observable can be expanded in the Majorana basis $O=\sum_S c_S\,\Gamma_S$ with $\Gamma_S=\prod_{i\in S}\psi_i$, and the shadow tomography protocol of~\cite{Huang:2022cmu} with $\poly(N)$ copies learns $\tr(O\,\rho(T))$ for all degree-$d$ operators (those with $|S|\leq d$) at $d=O(1)$, by the quantum Bohnenblust--Hille inequality. The prediction error for any observable~$Q$ therefore equals the contribution of its high-degree components, we obtain 
\begin{equation}\label{eq:Q_error}
  \bigl|\langle Q\rangle_{\mathrm{true}}-\langle Q\rangle_{\mathrm{predicted}}\bigr| \;=\; \Bigl|\sum_{|S|>d}c_S\,\tr(\Gamma_S\,\rho(T))\Bigr| + O(\varepsilon)\,.
\end{equation}
For $Q=\ell$, the degree-$\leq d$ expectation value thermalizes after $T\geq\Tscr$. Concretely, by ETH, each few-body correlator satisfies $|\langle O(\tau)\rangle_{\rho(T)}-\langle O\rangle_\beta|\leq\|O\|_\infty\cdot e^{-cN}$, so $\langle\ell(T)\rangle^{(\leq d)}=\ell_0+O(d/N)+O(e^{-cN})$ is time-independent. Therefore, the linear growth $(2\pi/\beta)T$ resides entirely in the high-degree sector, we get
\begin{equation}\label{eq:ell_gap_quantitative}
  \Delta\ell \;\geq\; \frac{2\pi}{\beta}\,T - O(1/N) - O(1/\poly(N)) \;=\; \frac{2\pi}{\beta}\,T - O(1/N)\,,
\end{equation}
for $d=O(1)$ and $\varepsilon=1/\poly(N)$, confirming the gap of \cref{eq:ell_gap} with a precise subleading correction. One relevant comment here is the fact that the degree-$d$ truncation captures only $O(d/N)$ of the interior information and accessing the full growth requires degree $d\sim N$, at which point the sample complexity $\poly(N^N)=\exp(O(N\log N))$ is exponential.

\section*{Acknowledgement}

Claude Opus 4.7 was used to improve the clarity and readability of the manuscript. All technical content, derivations, and numerical results are the author's own and were verified by the author.

\bibliographystyle{JHEP} 
\bibliography{references} 

@article{Engelhardt:2024hpe,
    author = "Engelhardt, Netta and Folkestad, {\r{A}}smund and Levine, Adam and Verheijden, Evita and Yang, Lisa",
    title = "{Cryptographic Censorship}",
    eprint = "2402.03425",
    archivePrefix = "arXiv",
    primaryClass = "hep-th",
    doi = "10.1007/JHEP01(2025)122",
    journal = "JHEP",
    volume = "01",
    pages = "122",
    year = "2025"
}

@article{Sachdev:1992fk,
    author        = "Sachdev, Subir and Ye, Jinwu",
    title         = "{Gapless spin-fluid ground state in a random quantum Heisenberg magnet}",
    journal       = "Phys. Rev. Lett.",
    volume        = "70",
    pages         = "3339--3342",
    year          = "1993",
    doi           = "10.1103/PhysRevLett.70.3339",
    eprint        = "cond-mat/9212030",
    archivePrefix = "arXiv"
}

@misc{Kitaev:2015,
    author        = "Kitaev, Alexei",
    title         = "{A simple model of quantum holography}",
    howpublished  = "Talks at KITP, April 7, 2015 and May 27, 2015",
    year          = "2015",
    url           = "https://online.kitp.ucsb.edu/online/entangled15/kitaev/"
}

@article{Maldacena:2016hyu,
    author        = "Maldacena, Juan and Stanford, Douglas",
    title         = "{Remarks on the Sachdev-Ye-Kitaev model}",
    journal       = "Phys. Rev. D",
    volume        = "94",
    pages         = "106002",
    year          = "2016",
    doi           = "10.1103/PhysRevD.94.106002",
    eprint        = "1604.07818",
    archivePrefix = "arXiv",
    primaryClass  = "hep-th"
}

@article{Kitaev:2017awl,
    author        = "Kitaev, Alexei and Suh, Soo-Jong",
    title         = "{The soft mode in the Sachdev-Ye-Kitaev model and its gravity dual}",
    journal       = "JHEP",
    volume        = "05",
    pages         = "183",
    year          = "2018",
    doi           = "10.1007/JHEP05(2018)183",
    eprint        = "1711.08467",
    archivePrefix = "arXiv",
    primaryClass  = "hep-th"
}

@article{Cotler:2016fpe,
    author        = "Cotler, Jordan S. and Gur-Ari, Guy and Hanada, Masanori and Polchinski, Joseph and Saad, Phil and Shenker, Stephen H. and Stanford, Douglas and Streicher, Alexandre and Tezuka, Masaki",
    title         = "{Black Holes and Random Matrices}",
    journal       = "JHEP",
    volume        = "05",
    pages         = "118",
    year          = "2017",
    doi           = "10.1007/JHEP05(2017)118",
    eprint        = "1611.04650",
    archivePrefix = "arXiv",
    primaryClass  = "hep-th"
}

@article{You:2016ldz,
    author        = "You, Yi-Zhuang and Ludwig, Andreas W. W. and Xu, Cenke",
    title         = "{Sachdev-Ye-Kitaev Model and Thermalization on the Boundary of Many-Body Localized Fermionic Symmetry Protected Topological States}",
    journal       = "Phys. Rev. B",
    volume        = "95",
    pages         = "115150",
    year          = "2017",
    doi           = "10.1103/PhysRevB.95.115150",
    eprint        = "1602.06964",
    archivePrefix = "arXiv",
    primaryClass  = "cond-mat.str-el"
}

@article{Sonner:2017hxc,
    author        = "Sonner, Julian and Vielma, Manuel",
    title         = "{Eigenstate thermalization in the Sachdev-Ye-Kitaev model}",
    journal       = "JHEP",
    volume        = "11",
    pages         = "149",
    year          = "2017",
    doi           = "10.1007/JHEP11(2017)149",
    eprint        = "1707.08013",
    archivePrefix = "arXiv",
    primaryClass  = "hep-th"
}

@article{Nayak:2019khe,
    author = "Nayak, Pranjal and Sonner, Julian and Vielma, Manuel",
    title = "{Eigenstate Thermalisation in the conformal Sachdev-Ye-Kitaev model: an analytic approach}",
    eprint = "1903.00478",
    archivePrefix = "arXiv",
    primaryClass = "hep-th",
    doi = "10.1007/JHEP10(2019)019",
    journal = "JHEP",
    volume = "10",
    pages = "019",
    year = "2019"
}

@article{Jackiw:1984je,
    author        = "Jackiw, R.",
    title         = "{Lower dimensional gravity}",
    journal       = "Nucl. Phys. B",
    volume        = "252",
    pages         = "343--356",
    year          = "1985",
    doi           = "10.1016/0550-3213(85)90448-1"
}

@article{Teitelboim:1983ux,
    author        = "Teitelboim, Claudio",
    title         = "{Gravitation and Hamiltonian structure in two spacetime dimensions}",
    journal       = "Phys. Lett. B",
    volume        = "126",
    pages         = "41--45",
    year          = "1983",
    doi           = "10.1016/0370-2693(83)90012-6"
}

@article{Saad:2019lba,
    author        = "Saad, Phil and Shenker, Stephen H. and Stanford, Douglas",
    title         = "{JT gravity as a matrix integral}",
    eprint        = "1903.11115",
    archivePrefix = "arXiv",
    primaryClass  = "hep-th",
    year          = "2019",
    url           = "https://arxiv.org/abs/1903.11115"
}

@article{Maldacena:2016upp,
    author        = "Maldacena, Juan and Stanford, Douglas and Yang, Zhenbin",
    title         = "{Conformal symmetry and its breaking in two-dimensional nearly anti-de Sitter space}",
    journal       = "Prog. Theor. Exp. Phys.",
    volume        = "2016",
    pages         = "12C104",
    year          = "2016",
    doi           = "10.1093/ptep/ptw124",
    eprint        = "1606.01857",
    archivePrefix = "arXiv",
    primaryClass  = "hep-th"
}

@article{Jensen:2016pah,
    author        = "Jensen, Kristan",
    title         = "{Chaos in AdS$_2$ Holography}",
    journal       = "Phys. Rev. Lett.",
    volume        = "117",
    pages         = "111601",
    year          = "2016",
    doi           = "10.1103/PhysRevLett.117.111601",
    eprint        = "1605.06098",
    archivePrefix = "arXiv",
    primaryClass  = "hep-th"
}

@article{Faulkner:2022ada,
    author        = "Faulkner, Thomas and Li, Ming",
    title         = "{Asymptotically isometric codes for holography}",
    eprint        = "2211.12439",
    archivePrefix = "arXiv",
    primaryClass  = "hep-th",
    year          = "2022",
    url           = "https://arxiv.org/abs/2211.12439"
}

@article{Lin_2022,
   title={The bulk Hilbert space of double scaled SYK},
   volume={2022},
   ISSN={1029-8479},
   url={http://dx.doi.org/10.1007/JHEP11(2022)060},
   DOI={10.1007/jhep11(2022)060},
   number={11},
   journal={Journal of High Energy Physics},
   publisher={Springer Science and Business Media LLC},
   author={Lin, Henry W.},
   year={2022},
   month=Nov }

@article{Hamilton:2005ju,
    author = "Hamilton, Alex and Kabat, Daniel N. and Lifschytz, Gilad and Lowe, David A.",
    title = "{Local bulk operators in AdS/CFT: A Boundary view of horizons and locality}",
    eprint = "hep-th/0506118",
    archivePrefix = "arXiv",
    reportNumber = "BROWN-HET-1448, CU-TP-1130",
    doi = "10.1103/PhysRevD.73.086003",
    journal = "Phys. Rev. D",
    volume = "73",
    pages = "086003",
    year = "2006"
}

@article{Hamilton:2006fh,
    author        = "Hamilton, Alex and Kabat, Daniel N. and Lifschytz, Gilad and Lowe, David A.",
    title         = "{Holographic representation of local bulk operators}",
    journal       = "Phys. Rev. D",
    volume        = "74",
    pages         = "066009",
    year          = "2006",
    doi           = "10.1103/PhysRevD.74.066009",
    eprint        = "hep-th/0606141",
    archivePrefix = "arXiv",
    primaryClass  = "hep-th"
}

@article{Hayden:2007cs,
    author        = "Hayden, Patrick and Preskill, John",
    title         = "{Black holes as mirrors: quantum information in random subsystems}",
    journal       = "JHEP",
    volume        = "09",
    pages         = "120",
    year          = "2007",
    doi           = "10.1088/1126-6708/2007/09/120",
    eprint        = "0708.4025",
    archivePrefix = "arXiv",
    primaryClass  = "hep-th"
}

@article{Sekino:2008he,
    author        = "Sekino, Yasuhiro and Susskind, Leonard",
    title         = "{Fast Scramblers}",
    journal       = "JHEP",
    volume        = "10",
    pages         = "065",
    year          = "2008",
    doi           = "10.1088/1126-6708/2008/10/065",
    eprint        = "0808.2096",
    archivePrefix = "arXiv",
    primaryClass  = "hep-th"
}

@article{Shenker:2013pqa,
    author        = "Shenker, Stephen H. and Stanford, Douglas",
    title         = "{Black holes and the butterfly effect}",
    journal       = "JHEP",
    volume        = "03",
    pages         = "067",
    year          = "2014",
    doi           = "10.1007/JHEP03(2014)067",
    eprint        = "1306.0622",
    archivePrefix = "arXiv",
    primaryClass  = "hep-th"
}

@article{Maldacena:2015waa,
    author        = "Maldacena, Juan and Shenker, Stephen H. and Stanford, Douglas",
    title         = "{A bound on chaos}",
    journal       = "JHEP",
    volume        = "08",
    pages         = "106",
    year          = "2016",
    doi           = "10.1007/JHEP08(2016)106",
    eprint        = "1503.01409",
    archivePrefix = "arXiv",
    primaryClass  = "hep-th"
}

@article{Stanford:2014jda,
    author        = "Stanford, Douglas and Susskind, Leonard",
    title         = "{Complexity and shock wave geometries}",
    journal       = "Phys. Rev. D",
    volume        = "90",
    pages         = "126007",
    year          = "2014",
    doi           = "10.1103/PhysRevD.90.126007",
    eprint        = "1406.2678",
    archivePrefix = "arXiv",
    primaryClass  = "hep-th"
}

@misc{susskind2014addendumcomputationalcomplexityblack,
      title={Addendum to Computational Complexity and Black Hole Horizons}, 
      author={Leonard Susskind},
      year={2014},
      eprint={1403.5695},
      archivePrefix={arXiv},
      primaryClass={hep-th},
      url={https://arxiv.org/abs/1403.5695}, 
}

@article{Susskind:2018pmk,
    author        = "Susskind, Leonard",
    title         = "{Three Lectures on Complexity and Black Holes}",
    eprint        = "1810.11563",
    archivePrefix = "arXiv",
    primaryClass  = "hep-th",
    year          = "2018",
    url           = "https://arxiv.org/abs/1810.11563"
}

@article{Brown:2017jil,
    author        = "Brown, Adam R. and Susskind, Leonard",
    title         = "{Second law of quantum complexity}",
    journal       = "Phys. Rev. D",
    volume        = "97",
    pages         = "086015",
    year          = "2018",
    doi           = "10.1103/PhysRevD.97.086015",
    eprint        = "1701.01107",
    archivePrefix = "arXiv",
    primaryClass  = "hep-th"
}

@article{Iliesiu:2021ari,
    author        = "Iliesiu, Luca V. and Mezei, M\'ark and S\'arosi, G\'abor",
    title         = "{The volume of the black hole interior at late times}",
    journal       = "JHEP",
    volume        = "07",
    pages         = "073",
    year          = "2022",
    doi           = "10.1007/JHEP07(2022)073",
    eprint        = "2107.06286",
    archivePrefix = "arXiv",
    primaryClass  = "hep-th"
}

@article{Brown:2019rox,
    author        = "Brown, Adam R. and Gharibyan, Hrant and Penington, Geoffrey and Susskind, Leonard",
    title         = "{The Python's Lunch: geometric obstructions to decoding Hawking radiation}",
    journal       = "JHEP",
    volume        = "08",
    pages         = "121",
    year          = "2020",
    doi           = "10.1007/JHEP08(2020)121",
    eprint        = "1912.00228",
    archivePrefix = "arXiv",
    primaryClass  = "hep-th"
}

@article{Engelhardt:2021mue,
    author        = "Engelhardt, Netta and Folkestad, {\AA}smund",
    title         = "{General bounds on holographic complexity}",
    journal       = "JHEP",
    volume        = "01",
    pages         = "040",
    year          = "2022",
    doi           = "10.1007/JHEP01(2022)040",
    eprint        = "2109.06883",
    archivePrefix = "arXiv",
    primaryClass  = "hep-th"
}

@misc{cryptoeprint:2018/544,
      author = {Zhengfeng Ji and Yi-Kai Liu and Fang Song},
      title = {Pseudorandom Quantum States},
      howpublished = {Cryptology {ePrint} Archive, Paper 2018/544},
      year = {2018},
      url = {https://eprint.iacr.org/2018/544}
}

@article{Yang:2023ddi,
    author        = "Yang, Lisa and Engelhardt, Netta",
    title         = "{The Complexity of Learning (Pseudo)random Dynamics of Black Holes and Other Chaotic Systems}",
    eprint        = "2302.11013",
    archivePrefix = "arXiv",
    primaryClass  = "quant-ph",
    year          = "2023",
    url           = "https://arxiv.org/abs/2302.11013"
}

@article{Huang:2022cmu,
    author        = "Huang, Hsin-Yuan and Chen, Sitan and Preskill, John",
    title         = "{Learning to Predict Arbitrary Quantum Processes}",
    journal       = "PRX Quantum",
    volume        = "4",
    pages         = "040337",
    year          = "2023",
    doi           = "10.1103/PRXQuantum.4.040337",
    eprint        = "2210.14894",
    archivePrefix = "arXiv",
    primaryClass  = "quant-ph"
}

@article{Bouland:2019pvu,
    author        = "Bouland, Adam and Fefferman, Bill and Vazirani, Umesh",
    title         = "{Computational pseudorandomness, the wormhole growth paradox, and constraints on the AdS/CFT duality}",
    eprint        = "1910.14646",
    archivePrefix = "arXiv",
    primaryClass  = "quant-ph",
    year          = "2019",
    url           = "https://arxiv.org/abs/1910.14646"
}

@article{Cotler:2017jue,
    author        = "Cotler, Jordan and Hunter-Jones, Nicholas and Liu, Junyu and Yoshida, Beni",
    title         = "{Chaos, Complexity, and Random Matrices}",
    journal       = "JHEP",
    volume        = "11",
    pages         = "048",
    year          = "2017",
    doi           = "10.1007/JHEP11(2017)048",
    eprint        = "1706.05400",
    archivePrefix = "arXiv",
    primaryClass  = "hep-th"
}

@article{Scott:2008,
    author        = "Scott, Andrew J.",
    title         = "{Optimizing quantum process tomography with unitary 2-designs}",
    journal       = "J. Phys. A",
    volume        = "41",
    pages         = "055308",
    year          = "2008",
    doi           = "10.1088/1751-8113/41/5/055308",
    eprint        = "0711.1017",
    archivePrefix = "arXiv",
    primaryClass  = "quant-ph"
}

@article{Collins:2006,
    author        = "Collins, Beno{\^\i}t and \'{S}niady, Piotr",
    title         = "{Integration with respect to the Haar measure on unitary, orthogonal and symplectic group}",
    journal       = "Commun. Math. Phys.",
    volume        = "264",
    pages         = "773--795",
    year          = "2006",
    doi           = "10.1007/s00220-006-1554-3",
    eprint        = "math-ph/0402073",
    archivePrefix = "arXiv"
}

@article{Collins:2022,
    author        = "Collins, Beno{\^\i}t and Matsumoto, Sho and Novak, Jonathan",
    title         = "{The Weingarten calculus}",
    journal       = "Notices Amer. Math. Soc.",
    volume        = "69",
    pages         = "734--745",
    year          = "2022",
    doi           = "10.1090/noti2474",
    eprint        = "2109.14890",
    archivePrefix = "arXiv",
    primaryClass  = "math-ph"
}

@inproceedings{Hopkins:2017cle,
    author        = "Hopkins, Samuel B. and Steurer, David",
    title         = "{Efficient Bayesian estimation from few samples: community detection and related problems}",
    booktitle     = "58th Annual IEEE Symposium on Foundations of Computer Science (FOCS)",
    pages         = "379--390",
    year          = "2017",
    doi           = "10.1109/FOCS.2017.42"
}

@article{Kunisky:2019bwz,
    author        = "Kunisky, Dmitriy and Wein, Alexander S. and Bandeira, Afonso S.",
    title         = "{Notes on computational hardness of hypothesis testing: predictions using the low-degree likelihood ratio}",
    eprint        = "1907.11636",
    archivePrefix = "arXiv",
    primaryClass  = "math.ST",
    year          = "2019",
    url           = "https://arxiv.org/abs/1907.11636"
}

@article{Whitfield_2011,
   title={Simulation of electronic structure Hamiltonians using quantum computers},
   volume={109},
   ISSN={1362-3028},
   url={http://dx.doi.org/10.1080/00268976.2011.552441},
   DOI={10.1080/00268976.2011.552441},
   number={5},
   journal={Molecular Physics},
   publisher={Informa UK Limited},
   author={Whitfield, James D. and Biamonte, Jacob and Aspuru-Guzik, Alán},
   year={2011},
   month=Mar, pages={735–750} }

@article{Berry:2006ys,
    author        = "Berry, Dominic W. and Ahokas, Graeme and Cleve, Richard and Sanders, Barry C.",
    title         = "{Efficient quantum algorithms for simulating sparse Hamiltonians}",
    journal       = "Commun. Math. Phys.",
    volume        = "270",
    pages         = "359--371",
    year          = "2007",
    doi           = "10.1007/s00220-006-0150-x",
    eprint        = "quant-ph/0508139",
    archivePrefix = "arXiv"
}

@article{MaHuang:2024,
    author        = "Ma, Fermi and Huang, Hsin-Yuan",
    title         = "{How to Construct Random Unitaries}",
    eprint        = "2410.10116",
    archivePrefix = "arXiv",
    primaryClass  = "quant-ph",
    year          = "2024",
    doi           = "10.1145/3717823.3718254",
    url           = "https://arxiv.org/abs/2410.10116"
}

@article{Metger:2024,
    author        = "Metger, Tony and Poremba, Alexander and Sinha, Makrand and Yuen, Henry",
    title         = "{Simple constructions of linear-depth $t$-designs and pseudorandom unitaries}",
    eprint        = "2404.12647",
    archivePrefix = "arXiv",
    primaryClass  = "quant-ph",
    year          = "2024",
    doi           = "10.1109/FOCS61266.2024.00076",
    url           = "https://arxiv.org/abs/2404.12647"
}

@article{Harlow:2018tqv,
    author        = "Harlow, Daniel and Jafferis, Daniel",
    title         = "{The Factorization Problem in Jackiw-Teitelboim Gravity}",
    journal       = "JHEP",
    volume        = "02",
    pages         = "177",
    year          = "2020",
    doi           = "10.1007/JHEP02(2020)177",
    eprint        = "1804.01081",
    archivePrefix = "arXiv",
    primaryClass  = "hep-th"
}

@book{Mehta:2004,
    author        = "Mehta, Madan Lal",
    title         = "{Random Matrices}",
    edition       = "3rd",
    publisher     = "Elsevier/Academic Press",
    year          = "2004",
    doi           = "10.1016/C2009-0-22297-5"
}

@article{Gharibyan:2018jrp,
    author = "Gharibyan, Hrant and Hanada, Masanori and Shenker, Stephen H. and Tezuka, Masaki",
    title = "{Onset of Random Matrix Behavior in Scrambling Systems}",
    eprint = "1803.08050",
    archivePrefix = "arXiv",
    primaryClass = "hep-th",
    doi = "10.1007/JHEP07(2018)124",
    journal = "JHEP",
    volume = "07",
    pages = "124",
    year = "2018",
    note = "[Erratum: JHEP 02, 197 (2019)]"
}

@article{Cipolloni:2024,
    author        = "Cipolloni, Giorgio",
    title         = "{Eigenvector Decorrelation for Random Matrices and Applications}",
    eprint        = "2410.10718",
    archivePrefix = "arXiv",
    primaryClass  = "math.PR",
    year          = "2024",
    url           = "https://arxiv.org/abs/2410.10718"
}

@article{Behrends:2019,
    author        = "Behrends, Jan and Bardarson, J{\'o}n H. and B{\'e}ri, Benjamin",
    title         = "{Tenfold way and many-body zero modes in the Sachdev-Ye-Kitaev model}",
    journal       = "Phys. Rev. B",
    volume        = "99",
    pages         = "195123",
    year          = "2019",
    doi           = "10.1103/PhysRevB.99.195123",
    eprint        = "1812.10853",
    archivePrefix = "arXiv",
    primaryClass  = "cond-mat.str-el"
}

@article{Saad:2018bqo,
    author        = "Saad, Phil and Shenker, Stephen H. and Stanford, Douglas",
    title         = "{A semiclassical ramp in SYK and in gravity}",
    eprint        = "1806.06840",
    archivePrefix = "arXiv",
    primaryClass  = "hep-th",
    year          = "2018",
    url           = "https://arxiv.org/abs/1806.06840"
}

@article{Baik:2005,
    author        = "Baik, Jinho and Ben Arous, G{\'e}rard and P{\'e}ch{\'e}, Sandrine",
    title         = "{Phase transition of the largest eigenvalue for nonnull complex sample covariance matrices}",
    journal       = "Ann. Probab.",
    volume        = "33",
    pages         = "1643--1697",
    year          = "2005",
    doi           = "10.1214/009117905000000233",
    eprint        = "math/0403022",
    archivePrefix = "arXiv"
}

@inproceedings{BennettBBV97,
    author        = "Bennett, Charles H. and Bernstein, Ethan and Brassard, Gilles and Vazirani, Umesh",
    title         = "{Strengths and Weaknesses of Quantum Computing}",
    journal       = "SIAM J. Comput.",
    volume        = "26",
    pages         = "1510--1523",
    year          = "1997",
    doi           = "10.1137/S0097539796300933",
    eprint        = "quant-ph/9701001",
    archivePrefix = "arXiv"
}

@article{Qi:2018bje,
    author        = "Qi, Xiao-Liang and Streicher, Alexandre",
    title         = "{Quantum Epidemiology: Operator Growth, Thermal Effects, and SYK}",
    journal       = "JHEP",
    volume        = "08",
    pages         = "012",
    year          = "2019",
    doi           = "10.1007/JHEP08(2019)012",
    eprint        = "1810.11958",
    archivePrefix = "arXiv",
    primaryClass  = "hep-th"
}

@article{Hayashi:1998,
    author        = "Hayashi, Masahito",
    title         = "{Asymptotic estimation theory for a finite-dimensional pure state model}",
    journal       = "J. Phys. A",
    volume        = "31",
    pages         = "4633--4655",
    year          = "1998",
    doi           = "10.1088/0305-4470/31/20/006"
}

@article{Keyl:2001,
    author        = "Keyl, Michael and Werner, Reinhard F.",
    title         = "{Optimal cloning of pure states, testing single clones}",
    journal       = "J. Math. Phys.",
    volume        = "40",
    pages         = "3283--3299",
    year          = "1999",
    doi           = "10.1063/1.532887",
    eprint        = "quant-ph/9807010",
    archivePrefix = "arXiv"
}

@article{Altland_1997,
   title={Nonstandard symmetry classes in mesoscopic normal-superconducting hybrid structures},
   volume={55},
   ISSN={1095-3795},
   url={http://dx.doi.org/10.1103/PhysRevB.55.1142},
   DOI={10.1103/physrevb.55.1142},
   number={2},
   journal={Physical Review B},
   publisher={American Physical Society (APS)},
   author={Altland, Alexander and Zirnbauer, Martin R.},
   year={1997},
   month=Jan, pages={1142–1161} }

@misc{leutheusser2024subregionsubalgebradualityemergencespace,
      title={Subregion-subalgebra duality: emergence of space and time in holography}, 
      author={Samuel Leutheusser and Hong Liu},
      year={2024},
      eprint={2212.13266},
      archivePrefix={arXiv},
      primaryClass={hep-th},
      url={https://arxiv.org/abs/2212.13266}, 
}

@article{PhysRevA.80.012304,
  title = {Exact and approximate unitary 2-designs and their application to fidelity estimation},
  author = {Dankert, Christoph and Cleve, Richard and Emerson, Joseph and Livine, Etera},
  journal = {Phys. Rev. A},
  volume = {80},
  issue = {1},
  pages = {012304},
  numpages = {6},
  year = {2009},
  month = {Jul},
  publisher = {American Physical Society},
  doi = {10.1103/PhysRevA.80.012304},
  url = {https://link.aps.org/doi/10.1103/PhysRevA.80.012304}
}

@article{Suzuki:2002zuq,
    author = "Suzuki, Masuo",
    title = "{Fractal decomposition of exponential operators with applications to many-body theories and Monte Carlo simulations}",
    doi = "10.1016/0375-9601(90)90962-N",
    journal = "Phys. Lett. A",
    volume = "146",
    pages = "319--323",
    month = "9",
    year = "2002"
}

@article{Suzuki:1991jtk,
    author = "Suzuki, Masuo",
    title = "{General theory of fractal path integrals with applications to many-body theories and statistical physics}",
    doi = "10.1063/1.529425",
    journal = "J. Math. Phys.",
    volume = "32",
    number = "2",
    pages = "400",
    year = "1991"
}

@article{Horodecki:1999,
    author        = "Horodecki, Micha{\l} and Horodecki, Pawe{\l} and Horodecki, Ryszard",
    title         = "{General teleportation channel, singlet fraction, and fidelity}",
    journal       = "Phys. Rev. A",
    volume        = "60",
    pages         = "1888--1898",
    year          = "1999",
    doi           = "10.1103/PhysRevA.60.1888",
    eprint        = "quant-ph/9807091",
    archivePrefix = "arXiv"
}

@book{Meckes:2019,
    author    = "Meckes, Elizabeth S.",
    title     = "{The Random Matrix Theory of the Classical Compact Groups}",
    publisher = "Cambridge University Press",
    address   = "Cambridge",
    year      = "2019",
    doi       = "10.1017/9781108303453"
}

@article{Ambainis:2002,
    author        = "Ambainis, Andris",
    title         = "{Quantum lower bounds by quantum arguments}",
    journal       = "J. Comput. Syst. Sci.",
    volume        = "64",
    number        = "4",
    pages         = "750--767",
    year          = "2002",
    doi           = "10.1006/jcss.2002.1826",
    eprint        = "quant-ph/0002066",
    archivePrefix = "arXiv"
}

@article{Rabinovici_2022,
   title={Krylov complexity from integrability to chaos},
   volume={2022},
   ISSN={1029-8479},
   url={http://dx.doi.org/10.1007/JHEP07(2022)151},
   DOI={10.1007/jhep07(2022)151},
   number={7},
   journal={Journal of High Energy Physics},
   publisher={Springer Science and Business Media LLC},
   author={Rabinovici, E. and Sánchez-Garrido, A. and Shir, R. and Sonner, J.},
   year={2022},
}

@misc{cryptoeprint:2023/1741,
      author = {Prabhanjan Ananth and Aditya Gulati and Fatih Kaleoglu and Yao-Ting Lin},
      title = {Pseudorandom Isometries},
      howpublished = {Cryptology {ePrint} Archive, Paper 2023/1741},
      year = {2023},
      url = {https://eprint.iacr.org/2023/1741}
}

@article{Czech:2012bh,
  author        = {Czech, Bartlomiej and Karczmarek, Joanna L. and Nogueira, Fernando and Van Raamsdonk, Mark},
  title         = {{The Gravity Dual of a Density Matrix}},
  eprint        = {1204.1330},
  archivePrefix = {arXiv},
  primaryClass  = {hep-th},
  doi           = {10.1088/0264-9381/29/15/155009},
  journal       = {Class. Quant. Grav.},
  volume        = {29},
  pages         = {155009},
  year          = {2012}
}

@article{Dong:2016eik,
  author        = {Dong, Xi and Harlow, Daniel and Wall, Aron C.},
  title         = {{Reconstruction of Bulk Operators within the Entanglement Wedge in Gauge-Gravity Duality}},
  eprint        = {1601.05416},
  archivePrefix = {arXiv},
  primaryClass  = {hep-th},
  doi           = {10.1103/PhysRevLett.117.021601},
  journal       = {Phys. Rev. Lett.},
  volume        = {117},
  number        = {2},
  pages         = {021601},
  year          = {2016}
}

@article{Hawking:1982dh,
  author  = {Hawking, S. W. and Page, Don N.},
  title   = {{Thermodynamics of Black Holes in anti-De Sitter Space}},
  doi     = {10.1007/BF01208266},
  journal = {Commun. Math. Phys.},
  volume  = {87},
  pages   = {577},
  year    = {1983}
}

@article{Witten:1998zw,
  author        = {Witten, Edward},
  title         = {{Anti-de Sitter Space, Thermal Phase Transition, and Confinement in Gauge Theories}},
  eprint        = {hep-th/9803131},
  archivePrefix = {arXiv},
  doi           = {10.4310/ATMP.1998.v2.n3.a3},
  journal       = {Adv. Theor. Math. Phys.},
  volume        = {2},
  pages         = {505--532},
  year          = {1998}
}

@article{Massar:1995,
  author  = {Massar, Serge and Popescu, Sandu},
  title   = {{Optimal Extraction of Information from Finite Quantum Ensembles}},
  doi     = {10.1103/PhysRevLett.74.1259},
  journal = {Phys. Rev. Lett.},
  volume  = {74},
  pages   = {1259--1263},
  year    = {1995}
}

@article{Acin:2001,
  author        = {Ac{\'i}n, A. and Jan{\'e}, E. and Vidal, G.},
  title         = {{Optimal Estimation of Quantum Dynamics}},
  eprint        = {quant-ph/0012015},
  archivePrefix = {arXiv},
  doi           = {10.1103/PhysRevA.64.050302},
  journal       = {Phys. Rev. A},
  volume        = {64},
  pages         = {050302},
  year          = {2001}
}

@article{Bisio:2009,
  author        = {Bisio, Alessandro and Chiribella, Giulio and D'Ariano, Giacomo Mauro and Facchini, Stefano and Perinotti, Paolo},
  title         = {{Optimal Quantum Learning of a Unitary Transformation}},
  eprint        = {0903.0543},
  archivePrefix = {arXiv},
  primaryClass  = {quant-ph},
  doi           = {10.1103/PhysRevA.81.032324},
  journal       = {Phys. Rev. A},
  volume        = {81},
  pages         = {032324},
  year          = {2010}
}

@article{Maldacena_1999,
   title={The Large-N Limit of Superconformal Field Theories and Supergravity},
   volume={38},
   ISSN={1572-9575},
   url={http://dx.doi.org/10.1023/A:1026654312961},
   DOI={10.1023/a:1026654312961},
   number={4},
   journal={International Journal of Theoretical Physics},
   publisher={Springer Science and Business Media LLC},
   author={Maldacena, Juan},
   year={1999},
   month=Apr, pages={1113–1133} }

@article{Almheiri:2014lwa,
  author        = {Almheiri, Ahmed and Dong, Xi and Harlow, Daniel},
  title         = {{Bulk Locality and Quantum Error Correction in AdS/CFT}},
  eprint        = {1411.7041},
  archivePrefix = {arXiv},
  primaryClass  = {hep-th},
  doi           = {10.1007/JHEP04(2015)163},
  journal       = {JHEP},
  volume        = {04},
  pages         = {163},
  year          = {2015}
}

@article{Low:2009,
  author        = {Low, Richard A.},
  title         = {{Pseudo-randomness and Learning in Quantum Computation}},
  eprint        = {1006.5227},
  archivePrefix = {arXiv},
  primaryClass  = {quant-ph},
  year          = {2010}
}

@article{Roberts:2018mnp,
  author        = {Roberts, Daniel A. and Stanford, Douglas and Streicher, Alexandre},
  title         = {{Operator growth in the SYK model}},
  eprint        = {1802.02633},
  archivePrefix = {arXiv},
  primaryClass  = {hep-th},
  doi           = {10.1007/JHEP06(2018)122},
  journal       = {JHEP},
  volume        = {06},
  pages         = {122},
  year          = {2018}
}

@article{Parker:2019yvk,
  author        = {Parker, Daniel E. and Cao, Xiangyu and Avdoshkin, Alexander and Scaffidi, Thomas and Altman, Ehud},
  title         = {{A Universal Operator Growth Hypothesis}},
  eprint        = {1812.08657},
  archivePrefix = {arXiv},
  primaryClass  = {cond-mat.stat-mech},
  doi           = {10.1103/PhysRevX.9.041017},
  journal       = {Phys. Rev. X},
  volume        = {9},
  number        = {4},
  pages         = {041017},
  year          = {2019}
}

\end{document}